\newtheorem{thm}{Theorem}[section]
\newtheorem{lem}[thm]{Lemma}
\newtheorem{defn}[thm]{Definition}
\newtheorem{rem}[thm]{Remark}
\newtheorem{alg}[thm]{Algorithm}
\newtheorem{exa}[thm]{Example}
\title{\bf On the connection between cherry-tree copulas and truncated R-vine copulas}
\author{\\ Edith KOV\'{A}CS  
and Tam\'{a}s SZ\'{A}NTAI\\ Budapest University of Technology and Economics}
\date{}
\begin{document}

\thispagestyle{plain}

\maketitle

\begin{abstract}
Vine copulas are a flexible way for modeling dependences using only 
pair-copulas as building blocks. However if the number of variables grows the
problem gets fastly intractable. For dealing with this problem Brechmann at al.
proposed the truncated R-vine copulas. The truncated R-vine copula has the
very useful property that it can be constructed by using only pair-wise
copulas and a lower number of conditional pair-wise copulas. In our earlier papers we introduced
the concept of cherry-tree copulas. In this paper we characterize the relation
between the cherry-tree copulas and the truncated R-vine copulas. Both are
based on exploiting of some conditional independences between the variables.
We give a necessary and sufficient condition for a cherry-tree copula to be a
truncated R-vine copula.  
We introduce a new perspective for truncated R-vine modeling. The new idea is
finding first a good fitting cherry-tree copula of order $k$. Then, if this is
also a truncated R-vine copula we can apply the Backward Algorithm introduced in
this paper. This
way the construction of a sequence of trees which leads to it becomes possible. So
the cherry-tree copula can be expressed by pair-copulas and conditional
pair-copulas. In the case when the fitted $k$ order cherry-tree copula is not
a truncated R-vine copula we give an algorithm to transform it into  truncated
R-vine copula at level $k+1$. Therefore this cherry-tree copula can also be
expressed by pair-copulas.
\footnote{Mathematical Subject
Classification(2008):{\it 60C05}, {\it 62H05}

Keywords and phrases:{\it Copula, conditional independences, Regular-vine, truncated vine, cherry-tree copula} }\end{abstract}
\bigskip

\section{Introduction}
\label{sec:sec1}
\smallskip

Copulas in general are known to be useful tool for modeling
multivariate probability distributions since they make possible to model separately the dependence structure and the univariate marginals.
In this paper we show how conditional independences can be utilized in the expression of multivariate copulas.
Regarding to this we proved in Kov{\'a}cs and Sz{\'a}ntai (2013) a theorem which links to a junction tree probability distribution the so called junction tree copula.

The paper Aas et al. (2009) calls the attention on the fact that ''conditional independence may reduce the number of the
pair-copula decompositions and hence simplify the construction''. 
In this paper the importance of choosing a good
factorisation which takes advantage of the conditional independence
relations between the random variables is pointed out. 
In Kov{\'a}cs and Sz{\'a}ntai (2013) we introduced the concept of cherry-tree copulas which exploits the conditional
independences between the variables.

The importance of taking into account the conditional independences between the
variables encoded in a Bayesian Network (related to a directed acyclic graph) was
explored in the papers Kurowicka and Cooke (2002) and Hanea et al. (2006). Two aspects of this problem were
discussed. First, when the Bayesian Network (BN) is known, some of the conditional
independences taken from the BN are used to simplify a given expression of the D- or C-vine
copulas, which are very special types of vine copulas. Second, the problem of reconstruction of the BN from a sample data set was formulated 
under the assumption that the joint distribution is multivariate normal.
For discovering the independences and conditional independences between the
variables in Hanea et al. (2006) the correlations, the conditional
correlations and the determinant of the correlation matrix are used.

The paper Bauer et al. (2012) is dealing with more general  
pair-copula constructions related to the non-Gaussian BNs. In their paper the BN is
supposed to be known. The formula of probability distribution associated to
the given BN is expressed by pair-copulas.

The truncated Regular-vine (R-vine) copula is defined in Kurowicka (2011) and Brechmann et al. (2012). In Kurowicka (2011)  an algorithm is developed for searching the "best truncated vine", which is defined by Kurowicka (2011) as the one whose nodes of the top trees (trees with most conditioning) correspond to the smallest absolute values of correlations. This restricts the applicability of this method to Gaussian copulas. 

In this paper we recall the concept of the cherry-tree copulas. An alternative definition of R-vines using a special hypergraph structure is given in Kov{\'a}cs and Sz{\'a}ntai (2013). There we proved that truncated vine copula is a special case of the cherry-tree copula.

In the preliminary section we recall all concepts that we need in the paper.
First we will remind some graph theoretical concepts since the conditional independences can be represented on graphs.
Then the concepts of copulas and R-vine copulas will be recalled. Finally the multivariate junction tree copula associated to a junction tree probability distribution and an equivalent definition of the R-vine copulas based on the cherry-tree graph structures will also be presented. 
In the third section we give a necessary and sufficient condition for a cherry-tree copula to be a truncated R-vine copula and an algorithm for obtaining the truncated R-vine structure from a given cherry-tree copula. 
In the fourth section we give a theorem for transforming a general cherry-tree copula into a truncated R-vine copula.
We finish the paper with conclusions and with highlighting the new perspectives given by the paper.

\section{Preliminaries}
\label{sec:sec2}
\smallskip

The reader who is familiar with the basic concepts presented in this preliminary section 
may skip some parts of it.

\subsection{Acyclic hypergraph, junction tree, junction tree probability distribution}
\label{subsec:subsec2.1}

In this subsection we introduce some concepts used in graph theory and
probability theory which are needed throughout the paper and present how these
can be linked to each other. For a good overview see Lauritzen and Spiegelhalter (1988) and Lauritzen (1996).
We first present the acyclic hypergraphs and junction trees. 
Then we introduce the cherry-trees as a special type of junction trees.
We finish this subsection with the multivariate joint
probability distribution associated to junction trees.

Let $V=\left\{ 1,\ldots ,d\right\} $ be a set of vertices and $\Gamma $ a
set of subsets of $V$ called {\it set of hyperedges}. A \textit{hypergraph}
consists of a set $V$ of vertices and a set $\Gamma $ of hyperedges. We
denote a hyperedge by $K_i$, where $K_i$ is a subset of $V$. 

The \textit{acyclic}{\normalsize \ }\textit{hypergraph} is a special type of
hypergraph which fulfills the following requirements:

\begin{itemize}
\item  Neither of the hyperedges of $\Gamma $ is a subset of another hyperedge.

\item  There exists a numbering of edges for which the \textit{running
intersection property} is fulfilled: $\forall j\geq 2\quad \ \exists \ i<j:\
K_i\supset K_j\cap \left( K_1\cup \ldots \cup K_{j-1}\right) $. (Other
formulation is that for all hyperedges $K_i$ and $K_j$ with $i<j-1$,
$K_i \cap K_j \subset K_s \ \mbox{for all} \ s, i<s<j$.)
\end{itemize}

Let $S_j=K_j\cap \left( K_1\cup \ldots \cup K_{j-1}\right) $, for $j>1$ and $%
S_1=\phi $. Let $R_j=K_j\backslash S_j$. We say that $S_j$\textit{separates} 
$R_j$ from $\left( K_1\cup \ldots \cup K_{j-1}\right) \backslash S_j$, and
call $S_j$ separator set or shortly separator.

Now we link these concepts to the terminology of junction trees.

The {\it junction tree} is a special tree stucture which is equivalent to the
connected acyclic hypergraphs (Lauritzen and Spiegelhalter (1988)) and Lauritzen (1996). The nodes of the tree correspond
to the hyperedges of the connected acyclic hypergraph and are called clusters, the "edges" of the junction tree
correspond to the separator sets and called separators. The set of all
clusters is denoted by $\Gamma$, the set of all separators is denoted by %
$\mathcal{S}$. A junction tree $(V,\Gamma, S)$ is defined by the set of vertices $V$, the set of nodes $\Gamma$ called also set of clusters, and the set of separators $S$. The junction tree with the largest cluster containing $k$
variables is called \textit{k-width junction tree}. 

An important relation between graphs and hypergraphs is given in Lauritzen and Spiegelhalter (1988): A
hypergraph is acyclic if and only if it can be considered to be the set of maximal
cliques of a triangulated ({\it chordal}) graph (a graph is triangulated if every cycle of
length greater than 4 has a chord). This means that the vertices in a cluster are all connected with each other.

In the Figure \ref{fig:fig1} one can see a) a triangulated graph, b) the
corresponding acyclic hypergraph and c) the corresponding junction tree.

\begin{figure}
  \includegraphics[bb=80 590 420 770,width=10cm]{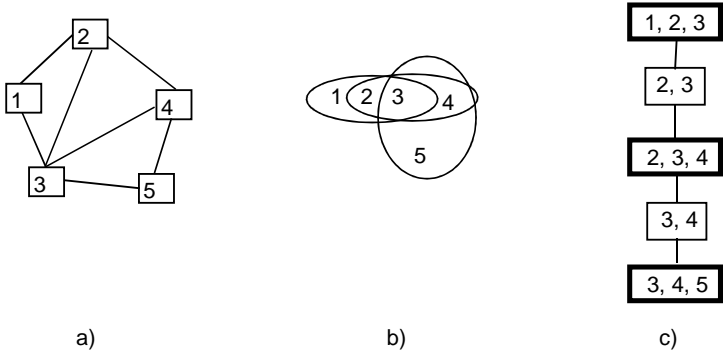}
\caption{a) Triangulated graph, b) The corresponding acyclic hypergraph, c) The corresponding junction tree which is a cherry-tree}
\label{fig:fig1}       
\end{figure}

We consider the random vector $\mathbf{X}=\left( X_1,\ldots ,X_d\right) ^T$, with the
set of indices $V=\left\{ 1,\ldots ,d\right\} $. 

Major advances in probabilistic inference methods based on graphical representations have been realized by Lauritzen and Spiegelhalter (1998) and Lauritzen (1996).
However probabilistic inference has the inherent disadvantage of being NP-hard. By exploiting the conditional independence relations among the discrete random variables of a probabilistic network the underlying joint probability space maybe decomposed into smaller subspaces corresponding to cliques in a triangulated graph (Lauritzen and Spiegelhalter (1998) and Lauritzen (1996)).

We say that the probability distribution associated to a graph has the Global Markov (GM) property if in the graph $\forall A,B,C\subset V$ and C separates A and B in terms of graph then $\mathbf{X}_A$ and $\mathbf{X}_B$ are conditionally independent given $\mathbf{X}_C$, which means in terms of probabilities that 

\[P\left( \mathbf{X}_{A\cup B\cup C}\right) =\dfrac{P\left( \mathbf{X}_{A\cup
C}\right) P\left( \mathbf{X}_{B\cup C}\right) }{P\left( \mathbf{X}_C\right) }.
\]

The concept of \textit{junction tree probability distribution} is related to
the junction tree graph and to the Global Markov property. A
junction tree probability distribution is defined as a fraction of some products of marginal probability distributions as follows:

\begin{equation}\label{eq:eq0}
P\left( \mathbf{X}\right) =\dfrac{\prod\limits_{C\in \Gamma}P\left(
\mathbf{X}_C\right) }{\prod\limits_{S\in \mathcal{S}}\left[ P\left( \mathbf{X}_S\right)
\right] ^{\nu _S-1}},
\end{equation}
where $\Gamma$ is the set of clusters of the
junction tree, $\mathcal S$ is the set of separators, $\nu_S$ is the number of those clusters
which are linked by the separator $S$. We note here that any equality of type (\ref{eq:eq0}) means that it is fulfilled 
for any possible realization of $\mathbf{X}$.

\begin{exa}
\label{ex:ex2.1}
The probability distribution corresponding to Figure \ref{fig:fig1} is:
\[
\begin{array}{rcl}
P\left( \mathbf{X}\right) & = & \dfrac{P\left( \mathbf{X}_{\{1,2,3\}}\right) P\left(
\mathbf{X}_{\{2,3,4\}}\right) P\left(\mathbf{X}_{\{3,4,5\}}\right) }{P\left(\mathbf{X}_{\{2,3\}}\right)
P\left( \mathbf{X}_{\{3,4\}}\right) }\vspace{3mm} \\
 & = & \dfrac{P\left( X_1,X_2,X_3\right) P\left(
X_2,X_3,X_4\right) P\left( X_3,X_4,X_5\right) }{P\left( X_2,X_3\right)
P\left( X_3,X_4\right) }.
\end{array}
\]
\end{exa}

In Buksz\'{a}r and Pr\'{e}kopa (2001) and Buksz\'{a}r and Sz\'{a}ntai (2002) there were used and named the so called $t$-cherry-tree graph structures. Since these can be regarded as a special type of junction tree we can give now the following definition. In this paper we will call this structure simply cherry-tree as this does not cause any confusion.

\begin{defn}
\label{def:def2.2} 
We call $k$ order cherry-tree the junction tree with all clusters of size $k$ and all separators of size $k-1$.
\end{defn}

Denoting by $\mathcal{C}_{\mbox{ch}}$ and $\mathcal{S}_{\mbox{ch}}$ the
set of clusters and the set of separators of the cherry junction tree, respectively we gave the following definition. 

\begin{defn}
\label{def:def2.3}
(Kov\'{a}cs, Sz\'{a}ntai (2012a))
In the discrete case the probability distribution given by (\ref{eq:eq1}) is called {\it cherry-tree probability distribution}

\begin{equation}\label{eq:eq1}
P_{\mbox{t-ch}}(\mathbf{X})=\dfrac{\prod\limits_{K\in \mathcal{C}_{\mbox{ch}}}P\left( \mathbf{X}_K\right) }{%
\prod\limits_{S\in \mathcal{S}_{\mbox{ch}}}\left( P\left( \mathbf{X}_S\right) \right) ^{\nu _s-1}}
\end{equation}
and in the continuous case the probability distribution given by (\ref{eq:eq2}) is called {\it cherry-tree probability density function}
\begin{equation}\label{eq:eq2}
f_{t-ch}\left( \mathbf{x}\right) =\dfrac{\prod\limits_{K\in C_{ch}}f_K\left( 
\mathbf{x}_k\right) }{\prod\limits_{S\in S_{Ch}}\left( f_S\left( \mathbf{x}%
_k\right) \right) ^{\nu _S-1}},
\end{equation}
where in both cases $\nu_S$ denotes the number of clusters which are linked by the separator $S$.
\end{defn}

The marginal probability distributions and the marginal density functions involved in the above
formulae are marginal probability distributions of $P\left( \mathbf{X}\right) $, respectively marginal density functions of $f\left( \mathbf{x}\right) $.

\subsection{Copula, R-vine copula}
\label{subsec:subsec2.2}

In this subsection we recall some definitions according to copulas and R-vine copulas.

\begin{defn}\label{def:def2.7}
A function $C:\left[0;1\right]^{d}\rightarrow\left[0;1\right]$
is called a $d$-dimensional copula if it satisfies the following conditions:

\begin{enumerate}

\item $C\left(u_{1},\ldots,u_{d}\right)$ is increasing in each component
$u_{i}$,

\item $C\left(u_{1},\ldots,u_{i-1},0,u_{i+1},\ldots,u_{d}\right)=0$ for all
$u_{k}\in\left[0;1\right]$,\ $k\neq i,\ i=1,\ldots,n$,

\item $C\left(  1,\ldots,1,u_{i},1,\ldots,1\right)  =u_{i}$ for all $u_{i}%
\in\left[  0;1\right]  ,\ i=1,\ldots,d$,

\item C is $d$-increasing, i.e for all $\left(  u_{1,1},\ldots,u_{1,d}\right)  $
and $\left(  u_{2,1},\ldots,u_{2,d}\right)  $ in $\left[  0;1\right]  ^{d}$
with $u_{1,i}<u_{2,i}$ for all i, we have

\[
\sum\limits_{i_{1}=1}^{2}\cdots
\sum\limits_{i_{d}=1}^{2}\left(  -1\right)  ^{\sum\limits_{j=1}^{d}i_{j}%
}C\left(  u_{i_{1},1},\ldots,u_{i_{d},d}\right)  \geq0.
\]

\end{enumerate}
\end{defn}

Due to Sklar's theorem if $X_{1},\ldots,X_{d}$ are continuous random variables
defined on a common probability space, with the univariate marginal cdf's
$F_{X_{i}}\left(  x_{i}\right)  $ and the joint cdf $F_{X_{1},\ldots,X_{d}%
}\left(  x_{1},\ldots,x_{d}\right)  $ then there exists a unique copula
function $C_{X_{1},\ldots,X_{d}}\left(  u_{1},\ldots,\right.$ 
$\left. u_{d}\right)  :\left[
0;1\right]  ^{d}\rightarrow\left[  0;1\right]  $ such that by the substitution
$u_{i}=F_{i}\left(  x_{i}\right), \ i=1,\ldots,d$ we get

\[
F_{X_{1},\ldots,X_{d}}\left(x_{1},\ldots,x_{d}\right)
=C_{X_{1},\ldots,X_{d}}\left(F_{1}\left(  x_{1}\right)  ,\ldots,F_{d}\left(  x_{d}\right)  \right)
\]
for all $\left(x_{1},\ldots,x_{d}\right)^{T}\in R^{d}.$

In the following we will use the vectorial notation
$F_{\mathbf{X}_{D}}\left(\mathbf{x}_{D}\right)=C_{\mathbf{X}_{D}}\left(
\mathbf{u}_{D}\right)$, where $\mathbf{u}_{D}%
=\left(  F_{X_{i_{1}}}\left(  x_{i_{1}}\right)  ,\ldots,F_{X_{i_{m}}}\left(
x_{i_{m}}\right)  \right)  ^{T}$ and $\left\{ i_1,\ldots, i_m \right\}=D\subseteq V$.

It is well known that

\[
f_{X_{i_{1}},\ldots X_{i_{m}}}\left(  x_{i_{1}},\ldots,x_{i_{m}}\right)
=c_{X_{i_{1}},\ldots X_{i_{m}}} \left( F_{X_{i_{1}}}\left(x_{i_{1}}\right),\ldots,F_{X_{i_{m}}}\left(  x_{i_{m}}\right)  \right)
\cdot \prod\limits_{k=1}^{m} f_{X_{i_{k}}} \left(x_{i_{k}}\right)
\]

In vectorial terms this can be written as
\begin{equation}\label{eq:eq3}
f_{\mathbf{X}_{D}}\left(  \mathbf{x}_{D}\right)  =c_{\mathbf{X}_{D}}\left(
\mathbf{u}_{D}\right)  \cdot\prod\limits_{i_{k}\in D}^{{}%
}f_{X_{i_{k}}}\left(  x_{i_{k}}\right),
\end{equation}
and after dividing by the product term
\[
c_{\mathbf{X}_{D}}\left(  \mathbf{u}_{D}%
\right)  =\dfrac{f_{\mathbf{X}_{D}}\left(  \mathbf{x}_{D}\right)  }%
{\prod\limits_{i_{k}\in D}^{{}}f_{X_{i_{k}}}\left(  x_{i_{k}}\right)  }.
\]

Pair-copula construction introduced by Joe (1997) is able to encode more
types of dependences in the same time since they can be expressed as a product of different types of bivariate copulas.
The R-vine structures were introduced by Bedford and Cooke (2001, 2002) and described in more details by Kurowicka and Cooke (2006). 

If it does not cause confusion, instead of $f_{\mathbf{X}_D}$ and $c_{\mathbf{X}_D}$  
we will write $f_D$ and $c_D$. We also introduce the following notations:

\begin{tabular}{lcl}
$F_{i,j|D}$ & -- & the conditional probability distribution function of $X_i$ and $X_j$ \\
            &    & given $\mathbf{X}_D$;\\
$f_{i,j|D}$ & -- & the conditional probability density function of $X_i$ and $X_j$ \\
            &    & given $\mathbf{X}_D$, \\
$c_{i,j|D}$ & -- & the conditional copula density function corresponding to  $f_{i,j|D}$,\\
\end{tabular}

where $D\subset V;i,j\in V\backslash D$.

According to Kurowicka and Cooke (2006), the definition of the R-vine graph structure is given as:
\begin{defn}\label{def:def2.8}
A \textit{Regular-vine (R-vine) on d variables} consists of a
sequence of trees $T_1,T_2,\ldots ,T_{d-1}$ with nodes $N_i$ and edges $E_i$
for $i=1,\ldots ,d-1$, which satisfy the following conditions:
\end{defn}

\begin{itemize}
\item  $T_1$ has nodes $N_1=\left\{ 1,\ldots ,d\right\} $ and edges $E_1$.

\item  For $i=2,\ldots ,d-1$ the tree $T_i$ has nodes $N_i=E_{i-1}$.

\item  Two edges $a$ and $b$ in tree $T_i$ are joined in tree $T_{i+1}$ if $\{a,b\} \in E_{i}$ and $|a \Delta b| = 2$, where $\Delta$ stands for the symmetric difference operator and $| \cdot |$ stands for the cardinality of the set.
\end{itemize}

We mention here that $a$ and $b$ are subsets of $V$ and $|a|=|b|=i$ in the tree $T_i$.

The last condition usually is referred to as \textit{proximity condition}.

It is shown in Bedford and Cooke (2001) and Kurowicka and Cooke (2006) that the edges in an R-vine
tree can be uniquely identified by two nodes (the conditioned nodes), and a
set of conditioning nodes, i.e., edges are denoted by $e=j\left( e\right)
,k\left( e\right) |D\left( E\right) $ where $D\left( E\right) $ is the
conditioning set and $j\left( e\right),k\left( e\right)\notin D(E)$. For a good overview see Czado (2010). 

The next theorem which can be regarded as a central theorem of R-vines see Bedford and Cooke (2001) links the
probability density function to the copulas assigned to the R-vine
structure. In Bedford and Cooke (2001) it is shown that there exists a unique probability density assigned to a given R-vine structure. In Bedford and Cooke (2002) it is shown that this probability distribution can be expressed as (\ref{eq:eq4}).

\begin{thm}\label{theo:theo2.9}
The joint density of $\mathbf{X}=\left( X_1,\ldots ,X_d\right) $ is
uniquely determined and given by:
\begin{equation}\label{eq:eq4}
\begin{array}{l}
f\left( x_1,\ldots ,x_d\right)=\left[ \prod\limits_{k=1}^df_k\left(x_k\right) \right]\\
\cdot\prod\limits_{i=2}^{d-1}\prod\limits_{e\in E_i}c_{j\left(
e\right) ,k\left( e\right) |D\left( e\right) }\left( F_{j\left( e\right)
|D\left( e\right) }\left( x_{j\left( e\right) }|\mathbf{x}_{D\left( e\right)
}\right) ,F_{k\left( e\right) |D\left( e\right) }\left( x_{k\left( e\right)
}|\mathbf{x}_{D\left( e\right) }\right)\left | \mathbf{x}_{D\left( e\right) } \right .\right),
\end{array}
\end{equation}
where $F_{j\left( e\right) |D(e)}\left( x_{j\left( e\right) }|\mathbf{x}%
_{D\left( e\right) }\right) $ can be calculated as follows:
\[
F_{j\left( e\right) |D(e)}\left( x_{j\left( e\right) }|\mathbf{x}_{D\left(
e\right) }\right) =\frac{\partial C_{i,j(e)|D\left( e\right) \backslash
\left\{ i\right\} }\left( u_i,u_j\right) }{\partial u_i}\left |_
{\hspace{-7mm}u_i=F_{i|D\left( e\right) \backslash \left\{ i\right\} }\left( x_i|\mathbf{x}%
_{D\left( e\right) \backslash \left\{ i\right\} }\right)  \atop %
u_j=F_{j(e)|D\left( e\right) \backslash \left\{ i\right\} }\left( x_{j\left(
e\right) }|\mathbf{x}_{D\left( e\right) \backslash \left\{ i\right\}
}\right)}\right .
\]
for $i\in D\left( e\right)$.
\end{thm}


Thus one can express $F_{j\left( e\right) |D(e)}\left( x_{j\left( e\right) }|%
\mathbf{x}_{D\left( e\right) }\right) $as a function of two conditional
distributions $F_{i|D\left( e\right) \backslash \left\{ i\right\} }\left(
x_i|\mathbf{x}_{D\left( e\right) \backslash \left\{ i\right\} }\right) $and $%
F_{j(e)|D\left( e\right) \backslash \left\{ i\right\} }\left( x_{j\left(
e\right) }|\mathbf{x}_{D\left( e\right) \backslash \left\{ i\right\}
}\right) $, with one less conditioning variable. This formula was given by Joe (1997). Hence all conditional
distribution functions in (\ref{eq:eq4}) are nested functions of the
univariate marginal distribution functions.
In (\ref{eq:eq4}) only pair-copulas are involved, therefore these constructions are also called pair-copula constructions.

We emphasize here that in general the parameter of the pair-copulas 
$c_{j\left( e \right) ,k\left( e\right) |D\left( e\right) }$ depends on the conditioning set $D(e)$.

In Subsection \ref{subsec:subsec2.4} we will give another definition for the R-vine which is related to a sequence of $k$ order cherry-trees.

In Aas et al. (2009) the inference of pair-copula decomposition is depicted in three parts: 

\begin{itemize}
\item  The selection of a specific factorization (structure);
\item  The choice of pair-copula types;
\item  The estimation of parameters of the chosen pair-copulas.
\end{itemize}

Our approach deals with finding a good factorization which exploits some of the conditional independences between the random variables.

There are many papers dealing with selecting specific
R-vines as C-vine or D-vine see for example Aas et al. (2009).

\subsection{The multivariate copula associated to a junction tree probability distribution. The cherry-tree copulas.}
\label{subsec:subsec2.3}
\smallskip

In this subsection we recall results published in Kov{\'a}cs, Sz{\'a}ntai (2012a) and (2012b). In Kov{\'a}cs, Sz{\'a}ntai (2012a) a theorem assuring the existence of a special type of copula density was proved. This copula density was assigned to a junction tree graph structure.
Let us consider a random vector $\mathbf{X}=(X_1, X_2, \ldots, X_d)^{T}$, with the set of indices $V=\{1, 2, \ldots, d\}$.  Let $(V,\Gamma,S)$  be a junction tree defined on the vertex set $V$, by the cluster set $\Gamma$, and the separator set $S$.

\begin{thm}\label{theo:theo3.1} (Kov{\'a}cs, Sz{\'a}ntai (2012a))
The copula density function associated to a junction tree
probability distribution
\[
f_{\mathbf{X}}\left(  \mathbf{x}\right)
=\dfrac{\prod\limits_{K\in\Gamma}f_{\mathbf{X}_{K}}\left(  \mathbf{x}%
_{K}\right)  }{\prod\limits_{S\in\mathcal{S}}\left[  f_{\mathbf{X}_{S}}\left(
\mathbf{x}_{S}\right)  \right]  ^{\nu_{S}-1}},
\]
is given by

\begin{equation}\label{eq:eq5}
c_{\mathbf{X}}\left(  \mathbf{u}_{V}\right)  = \dfrac{\prod\limits_{K\in\Gamma
}c_{\mathbf{X}_{K}}\left(  \mathbf{u}_{K}\right)  }{\prod\limits_{S\in
\mathcal{S}}\left[  c_{\mathbf{X}_{S}}\left(  \mathbf{u}_{S}\right)  \right]
^{\nu_{S}-1}},
\end{equation}
where $\nu_S$ is the number of clusters linked by $S$.
\end{thm}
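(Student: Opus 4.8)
The plan is to derive (\ref{eq:eq5}) directly from the marginal factorization assumed in the statement, by repeatedly using the elementary density-to-copula relation (\ref{eq:eq3}), and then to settle a bookkeeping identity on the junction tree. First I would apply (\ref{eq:eq3}) to the whole index set $V$, writing $c_{\mathbf{X}}(\mathbf{u}_V)=f_{\mathbf{X}}(\mathbf{x})\big/\prod_{i\in V}f_{X_i}(x_i)$. Next I substitute the hypothesized junction tree factorization of $f_{\mathbf{X}}$ into the numerator, and apply (\ref{eq:eq3}) once more to every factor: each cluster marginal becomes $f_{\mathbf{X}_K}(\mathbf{x}_K)=c_{\mathbf{X}_K}(\mathbf{u}_K)\prod_{i\in K}f_{X_i}(x_i)$ and each separator marginal becomes $f_{\mathbf{X}_S}(\mathbf{x}_S)=c_{\mathbf{X}_S}(\mathbf{u}_S)\prod_{i\in S}f_{X_i}(x_i)$. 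After this substitution $c_{\mathbf{X}}(\mathbf{u}_V)$ equals the target ratio $\prod_{K\in\Gamma}c_{\mathbf{X}_K}(\mathbf{u}_K)\big/\prod_{S\in\mathcal{S}}[c_{\mathbf{X}_S}(\mathbf{u}_S)]^{\nu_S-1}$ times a residual factor that is a product of powers of the univariate densities $f_{X_i}(x_i)$ only.

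The proof then reduces to showing that this residual factor equals $1$ for every realization of $\mathbf{X}$. Comparing exponents of a fixed $f_{X_i}(x_i)$, the numerator contributes one power for each cluster containing $i$, while the denominator contributes one power from the normalizing product $\prod_{i\in V}f_{X_i}$ plus $\nu_S-1$ powers from each separator $S$ with $i\in S$. Hence the residual factor is $1$ precisely when
\[
\#\{K\in\Gamma:\, i\in K\}\;=\;1+\sum_{\substack{S\in\mathcal{S}\\ i\in S}}(\nu_S-1)\qquad\text{for every }i\in V.
\]

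To establish this identity I would invoke the junction tree (running intersection) property. For a fixed vertex $i$, the clusters that contain $i$ induce a connected subtree $T_i$ of the junction tree, say with $n_i$ nodes; a tree on $n_i$ nodes has $n_i-1$ edges. An edge of the junction tree belongs to $T_i$ if and only if both of its endpoint clusters contain $i$, that is, if and only if its separator contains $i$; therefore the edges of $T_i$ are exactly the separator-edges whose separator set contains $i$. Grouping these edges by separator set and using that the $\nu_S$ clusters linked by a given separator $S$ form a subtree with $\nu_S-1$ edges, the number of edges of $T_i$ equals $\sum_{S\ni i}(\nu_S-1)$. Equating this with $n_i-1$ and recalling $n_i=\#\{K\in\Gamma:\, i\in K\}$ yields the required identity.

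The routine part is the double application of (\ref{eq:eq3}) and the subsequent cancellation; the only real obstacle is the combinatorial identity, and within it the two structural facts that make the exponents match: the connectivity of $T_i$, and that all junction tree edges carrying a common separator $S$ join exactly $\nu_S$ clusters. Both are consequences of the running intersection property, and for cherry-trees they are immediate, since all clusters have size $k$ and all separators size $k-1$, so any edge between two clusters sharing a set $S$ has $S$ as its separator.
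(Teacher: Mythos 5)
The paper itself gives no proof of Theorem \ref{theo:theo3.1}; it is only quoted from Kov\'{a}cs and Sz\'{a}ntai (2012a), so there is nothing in this document to compare your argument against line by line. On its own terms your proof is correct and complete. The two substitutions of (\ref{eq:eq3}) and the cancellation reduce everything, as you say, to the exponent identity $\#\{K\in\Gamma: i\in K\}=1+\sum_{S\ni i}(\nu_S-1)$, and your derivation of it from the running intersection property (the clusters containing $i$ induce a connected subtree $T_i$, whose edges are exactly the junction tree edges whose separator contains $i$) is the right mechanism. The one point worth making explicit is the reading of $\nu_S$: your count of $|E(T_i)|$ as $\sum_{S\ni i}(\nu_S-1)$ uses that the number of junction tree edges carrying the separator set $S$ equals $\nu_S-1$, i.e.\ that the clusters linked by $S$-labelled edges form a connected subtree. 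This is the standard convention implicit in the paper's formula (\ref{eq:eq0}) (otherwise (\ref{eq:eq0}) itself would not be the edge-product factorization), and it is automatic for the cherry-trees used later, but since it is an assumption on how $\nu_S$ is defined rather than a consequence of the running intersection property alone, you should state it rather than fold it silently into ``both are consequences of the running intersection property.'' An equally common alternative route, which avoids this bookkeeping entirely, is induction on the number of clusters using the separation $f_{K_1\cup\cdots\cup K_j}=f_{K_1\cup\cdots\cup K_{j-1}}\,f_{K_j}/f_{S_j}$, cancelling the univariate marginals of $R_j=K_j\setminus S_j$ at each step; your global exponent count buys a shorter, non-inductive argument at the price of the combinatorial lemma.
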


\begin{defn}\label{def:def3.2} (Kov{\'a}cs, Sz{\'a}ntai (2012a))
The copula defined by (\ref{eq:eq5}) is called junction tree copula.
\end{defn}

We saw that if the conditional independence structure between the random
variables makes possible the construction of a junction tree, then the
multivariate copula density associated to the joint probability distribution
can be expressed as a fraction of some products of lower dimensional copula densities.

\begin{defn}\label{def:def3.3} (Kov{\'a}cs, Sz{\'a}ntai (2012a))
The copula density function associated to a cherry-tree probability distribution is called cherry-tree copula and its expression is: 
\begin{equation}\label{eq:eq8}
c_{\mathbf{X}}\left(  \mathbf{u}_{V}\right)  = \dfrac{\prod\limits_{K \in \mathcal{C}_{ch}
}c_{\mathbf{X}_{K}}\left(  \mathbf{u}_{K}\right)  }{\prod\limits_{S\in
\mathcal{S}_{ch}}\left[  c_{\mathbf{X}_{S}}\left(  \mathbf{u}_{S}\right)  \right]
^{\nu_{S}-1}},
\end{equation}
where $\nu_S$ denotes the number of clusters which contain the separator $S$.
\end{defn}

\subsection{R-vine structure given by a sequence of cherry-trees. Cherry-vine copula.}
\label{subsec:subsec2.4}

In Kov{\'a}cs, Sz{\'a}ntai (2012a) and (2012b) we gave an alternative definition for R-vines by using the concept of cherry-tree.

\begin{defn}\label{def:def4.1}
The cherry-vine graph structure is defined by a sequence of 
cherry junction trees $T_1,T_2,\ldots ,T_{d-1}$ as follows

\begin{itemize}
\item  $T_1$is a regular tree on $V=\left\{ 1,\ldots ,d\right\} $, the set
of edges is $E_1=\left\{ e_i^1=\left( l_i,m_i\right)\right. ,$ $\left. i=1,\ldots ,d-1,\
l_i,m_i\in V\right\}$

\item  $T_2$ is the second order cherry junction tree on $%
V=\left\{ 1,\ldots ,d\right\} $, with the set of clusters $E_2=\left\{
e_i^2,i=1,\ldots ,d-1|e_i^2=e_i^1\right\} $ , $\left| e_i^1\right| =2$

\item  $T_k$ is one of the possible $k$ order cherry junction tree on $%
V=\left\{ 1,\ldots ,d\right\}$, with the set of clusters $E_k=\left\{
e_i^k,i=1,\ldots ,d-k+1\right\}$ , where each $e_i^k,\left| e_i^k\right| =k$
is obtained from the union of two linked clusters in the $\left( k-1\right) $
order cherry junction tree $T_{k-1}$.
\end{itemize}
\end{defn}

It is straightforward to see that Definition \ref{def:def4.1} is equivalent with Definition \ref{def:def2.8}. Next we define the pair-copulas assigned to the cherry-vine structure given in Definition \ref{def:def4.1}

The copula densities $c_{l_i,m_i}\left( F_{l_i}\left(
x_{_{li}}\right) ,F_{m_i}\left( x_{_{m_i}}\right) \right)$ are assigned to
the edges of the tree $T_{1}$.


The copula densities
\[
c_{a_{ij}^l,b_{ij}^l|S_{ij}^l}\left( F_{a_{ij}^l|S_{ij}^l}\left(
x_{a_{ij}^l}|\mathbf{x}_{S_{ij}^l}\right) ,F_{b_{ij}^l|S_{ij}^l}\left(
x_{b_{ij}^l}|\mathbf{x}_{S_{ij}^l}\right) \left | \;\mathbf{x}_{S_{ij}^l} \right. \right) 
\]
are assigned to each pair of clusters $e_i^l$ and $e_j^l$, which are linked
in the junction tree $T_l$, where:
\begin{equation}\label{eq:eq7a}
\begin{array}{rcl}
S^l &=&e_i^l\cap e_j^l,\vspace{2mm}\\
a_{ij}^l &=&e_i^l-S_{ij}^l \vspace{2mm}\\
b_{ij}^l &=&e_i^l-S_{ij}^l,
\end{array}
\end{equation}
for $l=2,\ldots , d-1$.
It is easy to see that $a_{ij}^l$ and $b_{ij}^l, \; l=2,\ldots , d-1$ contain a single element only.

The following theorem is a consequence of Theorem \ref{theo:theo2.9}.

\begin{thm}\label{theo:theo4.2}
The probability distribution associated to the cherry-vine structure given in Definition \ref{def:def4.1} can be expressed as:
\begin{equation}\label{eq:eq9}
\begin{array}{l}
f\left(  x_{1},\ldots,x_{d}\right)  =\left[  \prod\limits_{i=1}^{d}%
f_{i}\left(  x_{i}\right)  \right]  \left[  \prod\limits_{\left(  l_{i}%
m_{i}\right)  \in E_{1}}^{{}}c_{l_{i}m_{i}}\left(  F_{l_{i}}\left(  x_{l_{i}%
}\right)  ,F_{m_{i}}\left(  x_{m_{i}}\right)  \right)  \right]  \cdot\\
\prod\limits_{l=2}^{d-1}\prod\limits_{e_{i}^{l},e_{j}^{l}\in N\left(
T_{l}\right)  }^{{}}c_{a_{i,j}^{l},b_{i,j}^{l}|S_{ij}^{l}}
\left(F_{a_{i,j}^{l}|S_{ij}^{l}}\left(  x_{a_{i,j}^{l}}|\mathbf{x}_{S_{ij}^{l}%
}\right)  ,F_{b_{i,j}^{l}|S_{ij}^{l}}\left(  x_{b_{i,j}^{l}}|\mathbf{x}%
_{S_{ij}^{l}}\right)  |\mathbf{x}_{S_{ij}^{l}}\right)\end{array}
\end{equation}
where $e_{i}^{l},e_{j}^{l}\in N\left(  T_{l}\right)  $ denotes that $e_{i}%
^{l},e_{j}^{l}$ are linked in the cherry tree $T_{l}$, and $S_{ij}^{l}%
,a_{i,j}^{l},b_{i,j}^{l},$ are defined by (\ref{eq:eq7a}) and $F_{a_{i,j}^{l}|S_{ij}^{l}}%
$is defined in similar way as in Theorem \ref{theo:theo2.9}.
\end{thm}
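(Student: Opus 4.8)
The plan is to reduce the statement to the central R-vine density formula of Theorem \ref{theo:theo2.9} by exploiting the structural equivalence, already noted after Definition \ref{def:def4.1}, between the cherry-vine graph structure of Definition \ref{def:def4.1} and the R-vine structure of Definition \ref{def:def2.8}. The first step is to make this equivalence explicit at the level of individual trees and to track what it does to the objects carrying the copulas. Concretely, I would show that the clusters of the cherry junction tree $T_l$ --- all of size $l$ --- are exactly the nodes of the $l$-th R-vine tree, and that a link joining two clusters $e_i^l$ and $e_j^l$ in $T_l$ corresponds precisely to one edge $e \in E_l$ of the R-vine. Under this identification the separator $S_{ij}^l = e_i^l \cap e_j^l$ of size $l-1$ plays the role of the conditioning set $D(e)$, while the two singletons $a_{ij}^l = e_i^l \setminus S_{ij}^l$ and $b_{ij}^l = e_j^l \setminus S_{ij}^l$ defined in (\ref{eq:eq7a}) play the role of the conditioned nodes $j(e), k(e)$. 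The first tree is handled separately: the edges of the regular tree $T_1$ are identified with the edges $E_1$ of the R-vine first tree, and the pair-copulas $c_{l_i m_i}$ attached to them are exactly the unconditional pair-copulas of the R-vine.

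The second step is a verification that this level-by-level assignment is a genuine bijection between the copulas of the two constructions. Here I would combine two facts: first, that in $T_l$ two clusters are linked in the junction tree exactly when their union is a size-$(l+1)$ cluster of $T_{l+1}$, which is precisely the proximity condition $|a \Delta b| = 2$ of Definition \ref{def:def2.8}; and second, a counting check that the two constructions assign the same number of copulas. Since a regular tree on $d$ vertices has $d-1$ edges and each cherry junction tree $T_l$ (having $d-l+1$ clusters) has $d-l$ links, the total number of pair-copulas is $(d-1) + \sum_{l=2}^{d-1}(d-l) = \binom{d}{2}$, matching the number of pair-copulas in any R-vine on $d$ variables. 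This rules out any double-counting or omission and shows the correspondence of copulas is exhaustive.

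With the bijection in hand, the third step is purely a re-indexing. Applying Theorem \ref{theo:theo2.9} to the R-vine structure produced by Definition \ref{def:def4.1} gives the density (\ref{eq:eq4}); splitting its edge-product into the contribution of $E_1$ and the contributions of $E_2, \ldots, E_{d-1}$, and then translating each factor through the dictionary $j(e), k(e), D(e) \mapsto a_{ij}^l, b_{ij}^l, S_{ij}^l$ established above, turns (\ref{eq:eq4}) term by term into (\ref{eq:eq9}). The conditional distribution functions $F_{a_{ij}^l | S_{ij}^l}$ appearing as arguments are computed by the very same recursive $h$-function formula recorded in Theorem \ref{theo:theo2.9}, so no new analytic work is needed there.

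The step I expect to be the real obstacle is the middle one: rigorously establishing that the junction-tree links of $T_l$ are in one-to-one correspondence with the R-vine edges of the $l$-th tree. The delicate point is that Definition \ref{def:def4.1} builds each $T_{l+1}$ by taking unions of \emph{linked} clusters of $T_l$, whereas the proximity condition in Definition \ref{def:def2.8} is phrased in terms of symmetric differences of nodes; showing that these two prescriptions select exactly the same pairs requires invoking the running intersection property of the acyclic hypergraph underlying each cherry tree, so that each admissible union is realized by a unique link and conversely. Once this structural matching is secured, the equivalence of (\ref{eq:eq4}) and (\ref{eq:eq9}) follows mechanically and the theorem is a direct consequence of Theorem \ref{theo:theo2.9}, as its statement already advertises.
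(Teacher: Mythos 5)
Your proposal is correct and follows exactly the route the paper takes: the paper offers no written proof beyond noting that Definition \ref{def:def4.1} is equivalent to Definition \ref{def:def2.8} and that the theorem is "a consequence of Theorem \ref{theo:theo2.9}", and your argument is precisely that reduction, with the cluster-link/edge dictionary, the counting check, and the re-indexing spelled out. The step you flag as delicate (matching junction-tree links of $T_l$ to R-vine edges of $E_l$ via the running intersection property) is exactly what the paper leaves implicit, so your write-up fills in detail rather than diverging from it.
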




We illustrate these concepts on the following example.

\begin{exa}
\label{ex:ex2.2}
The edge set of the first tree and the sequence of the cherry-trees (in Figure \ref{fig:fig2}) together with the copula densities determined by Definition \ref{def:def4.1} are the following:

\[
\begin{array}{rl}
T_1: & E_1=\left\{ \left( 1,2\right), \left( 2,3\right), \left(2,6\right), \left( 3,4\right), \left( 4,5\right) \right\}, \\
& c_{1,2},c_{2,3},c_{2,6},c_{3,4},c_{4,5};\\
T_2: & E_2=\left\{e_1^2=\left(1,2\right), e_2^2=\left(2,3\right),e_3^2=\left(2,6\right), e_4^2=\left(3,4\right), 
                  e_5^2=\left(4,5\right)\right\}\\
& S_{1,2}^2=e_1^2\cap e_2^2=\left\{ 2\right\},\\
& \hspace{10mm}a_{1,2}^2=e_1^2-S_{1,2}^2=\left\{ 1\right\},
  b_{1,2}^2=e_2^2-S_{1,2}^2=\left\{ 3\right\},
  c_{a_{1,2}^2,b_{1,2}^2|S_{1,2}^2}=c_{1,3|2}\\
& S_{2,3}^2=e_2^2\cap e_3^2=\left\{ 2\right\},\\
& \hspace{10mm}a_{2,3}^2=e_2^2-S_{2,3}^2=\left\{ 3\right\},
  b_{2,3}^2=e_2^2-S_{2,3}^2=\left\{ 6\right\},
  c_{a_{2,3}^2,b_{2,3}^2|S_{2,3}^2}=c_{3,6|2}\\
& S_{2,4}^2=e_2^2\cap e_4^2=\left\{ 3\right\},\\ 
& \hspace{10mm}a_{2,4}^2=e_2^2-S_{2,4}^2=\left\{ 2\right\},
  b_{2,4}^2=e_4^2-S_{2,4}^2=\left\{ 4\right\},
  c_{a_{2,4}^2,b_{2,4}^2|S_{2,4}^2}=c_{2,4|3}\\
& S_{4,5}^2=e_4^2\cap e_5^2=\left\{ 4\right\},\\
& \hspace{10mm}a_{4,5}^2=e_4^2-S_{4,5}^2=\left\{ 3\right\},
  b_{4,5}^2=e_5^2-S_{4,5}^2=\left\{ 5\right\},
  c_{a_{4,5}^2,b_{4,5}^2|S_{4,5}^2}=c_{3,5|4};\\
T_3: & E_3=\left\{ e_1^3=\left( 1,2,3\right) ,e_2^3=\left( 2,3,4\right)
                  ,e_3^3=\left( 2,3,6\right) ,e_4^3=\left( 3,4,5\right) \right\}\\
& S_{1,2}^3=e_1^3\cap e_2^3=\left\{ 2,3\right\},\\
& \hspace{10mm}a_{1,2}^3=e_1^3-S_{1,2}^3=\left\{ 1\right\},
  b_{1,2}^3=e_2^3-S_{1,2}^3=\left\{ 4\right\},
  c_{a_{1,2}^3,b_{1,2}^3|S_{1,2}^3}=c_{1,4|2,3}\\ 
& S_{2,3}^3=e_2^3\cap e_3^3=\left\{ 2,3\right\},\\
& \hspace{10mm}a_{2,3}^3=e_2^3-S_{2,3}^3=\left\{ 4\right\},
  b_{2,3}^3=e_3^3-S_{2,3}^3=\left\{ 6\right\},
  c_{a_{2,3}^3,b_{2,3}^3|S_{2,3}^3}=c_{4,6|2,3}\\  
& S_{2,4}^3=e_2^3\cap e_4^3=\left\{ 3,4\right\},\\
& \hspace{10mm}a_{2,4}^3=e_2^3-S_{2,4}^3=\left\{ 2\right\},
  b_{2,4}^3=e_4^3-S_{2,4}^3=\left\{ 5\right\},
  c_{a_{2,4}^3,b_{2,4}^3|S_{2,4}^3}=c_{2,5|3,4};\\  
T_4: & E_4=\left\{ e_1^4=\left( 1,2,3,4\right) ,e_2^4=\left(2,3,4,5\right) ,e_3^4=\left( 2,3,4,6\right) \right\}\\
& S_{1,2}^4=e_1^4\cap e_2^4=\left\{ 2,3,4\right\},\\
& \hspace{10mm}a_{1,2}^4=e_1^4-S_{1,2}^4=\left\{ 1\right\},
  b_{1,2}^4=e_2^4-S_{1,2}^4=\left\{ 5\right\},
  c_{a_{1,2}^4,b_{1,2}^4|S_{1,2}^4}=c_{1,5|2,3,4}\\
& S_{2,3}^3=e_2^4\cap e_3^4=\left\{ 2,3,4\right\},\\
& \hspace{10mm}a_{2,3}^4=e_2^4-S_{2,3}^4=\left\{ 5\right\},
  b_{2,3}^4=e_3^4-S_{2,3}^4=\left\{ 6\right\},
  c_{a_{2,3}^4,b_{2,3}^4|S_{2,3}^4}=c_{5,6|2,3,4}\\ 
T_5: & E_5=\left\{ e_1^5=\left( 1,2,3,4,5\right) ,e_2^5=\left(2,3,4,5,6\right) \right\}\\
& S_{1,2}^5=e_1^5\cap e_2^5=\left\{ 2,3,4,5\right\},\\
& \hspace{10mm}a_{1,2}^5=e_1^5-S_{1,2}^5=\left\{ 1\right\},
  b_{1,2}^5=e_2^5-S_{1,2}^5=\left\{ 6\right\},
  c_{a_{1,2}^5,b_{1,2}^5|S_{1,2}^5}=c_{1,6|2,3,4,5}.                    
\end{array}
\]

\begin{figure}[!ht]
\centering\includegraphics[bb=65 160 515 750,width=8cm]{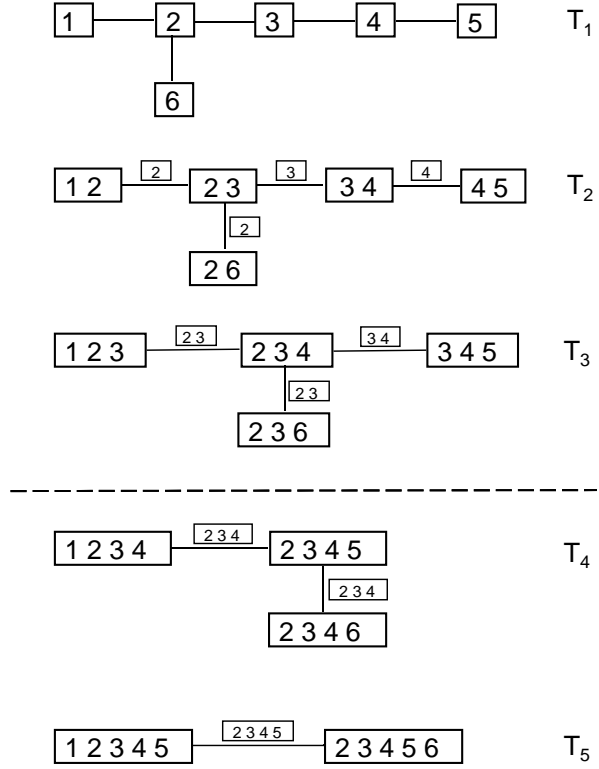}
\caption{Example for an R-vine structure on 6 variables using Definition \ref{def:def4.1}}
\label{fig:fig2}
\end{figure}

The joint probability density function of $\mathbf{X=}\left( X_1,\ldots
,X_6\right)$ can be expressed by Theorem \ref{theo:theo4.2} as follows:
\[
\begin{array}{l}
f\left( x_1,x_2,x_3,x_4,x_5,x_6\right) = \\
= \left( \prod\limits_{i=1}^6f\left(x_i\right) \right) c_{1,2}\left( F_1\left( x_1\right) ,F_2\left( x_2\right)\right) \cdot c_{2,3}\left( F_2\left( x_2\right) ,F_3\left( x_3\right)\right) \cdot c_{2,6}\left( F_2\left( x_2\right) ,F_6\left( x_6\right)\right) \\
\cdot c_{3,4}\left( F_3\left( x_3\right) ,F_4\left( x_4\right) \right)\\
\cdot c_{4,5}\left( F_4\left( x_4\right) ,F_5\left( x_5\right) \right)\\
\cdot c_{1,3|2}\left( F_{1|2}\left( x_1|x_2\right) ,F_{3|2}\left( x_3|x_2\right)\right) \\
\cdot c_{3,6|2}\left( F_{3|2}\left( x_3|x_2\right) ,F_{6|2}\left(x_6|x_2\right) \right) \\
\cdot c_{2,4|3}\left( F_{2|3}\left( x_2|x_3\right) ,F_{4|3}\left( x_4|x_3\right)\right) \\ 
\cdot c_{3,5|4}\left( F_{3|4}\left( x_3|x_4\right) ,F_{5|4}\left(x_5|x_4\right) \right) \\
\cdot c_{1,4|2,3}\left( F_{1|2,3}\left( x_1|x_2,x_3\right) ,F_{4|2,3}\left(x_4|x_2,x_3\right) \right) \\ 
\cdot c_{4,6|2,3}\left( F_{4|2,3}\left(x_4|x_2,x_3\right) ,F_{6|2,3}\left( x_6|x_2,x_3\right) \right) \\
\cdot c_{2,5|3,4}\left( F_{2|3,4}\left( x_2|x_3,x_4\right) ,F_{5|3,4}\left(x_5|x_3,x_4\right) \right) \\
\cdot c_{1,5|2,3,4}\left( F_{1|2,3,4}\left( x_1|x_2,x_3,x_4\right),F_{5|2,3,4}\left( x_5|x_2,x_3,x_4\right) \right) \\
\cdot c_{5,6|2,3,4\text{ }}\left( F_{5|2,3,4}\left( x_1|x_2,x_3,x_4\right),F_{6|2,3,4}\left( x_6|x_2,x_3,x_4\right) \right) \\
\cdot c_{1,6|2,3,4,5}\left( F_{1|2,3,4,5}\left( x_1|x_2,x_3,x_4,x_5\right)
,F_{6|2,3,4,5}\left( x_6|x_2,x_3,x_4,x_5\right) \right)
\end{array}
\]
\end{exa}

In this example we expressed the probability density function in a simplified form as in general each conditional pair copula depends on the conditioning variables (see (\ref{eq:eq9})).

Here we call the attention that our R-vine representation by a sequence of cherry-trees was also used in a recent paper by Hobaeck-Haff et al. 
(2016), Section 3. However in their paper it was not clearly declaired that this representation was introduced in our paper Kov{\'a}cs, Sz{\'a}ntai (2012b). 

In the following sections we give some theorems dealing with the relation between truncated R-vines and cherry-tree copulas.

\section{Truncated R-vine as a special case of cherry-tree copula}
\label{sec:sec3}
\smallskip

In this section we give theorems highlighting the relation between the truncated R-vine and cherry-tree copulas.
 
As the number of variables grows, the number of conditional pair-copulas grows rapidly. For example in (Dissman et al. (2013)) for 16 variables the number of 
pair-copulas involved, which have to be modeled and fitted is $120= 15+14+\cdots +2+1$. To keep such structure tractable for inference and model selection, the simplifying assumption that copulas of conditional distributions do not depend on the variables which they are conditioned on is popular. Although this assumption leads in many cases to misspecifications as it was pointed out in Acar et al. (2012) and in Hobaek Haff and Segers (2010). In Hobaek Haff et al. (2010) there are presented classes of distributions where simplification is applicable.
An idea to overcome the fitting of a large number of pair-copulas with large conditioning set is to exploit the conditional independences between the random variables. This idea was already discussed for Gaussian copulas in Kurovicka and Cooke (2006), based on the idea inspired by Whittaker (1990). However our approach is more general.

In the following remark Aas et al (2009) give the relation between conditional independences and conditional pair-copulas.

\begin{rem}\label{rem:rem5.1}
$X_i$ and $X_j$ are conditionally independent given the set of variables $%
\mathbf{X}_A, A \subset V\backslash \left\{ i,j\right\} $ if and only if
\[
c_{ij|A}\left( F_{i|A}\left( x_i|\mathbf{x}_A\right) ,F_{j|A}\left( x_j|%
\mathbf{x}_A\right) \left | \;\mathbf{x}_A \right. \right) =1. 
\]
\end{rem}

The following theorem is an important consequence of Theorem \ref{theo:theo2.9}.

\begin{thm}\label{theo:theo5.2}
If in an R-vine the conditional copula densities corresponding to the trees $T_k, T_{k+1}, \ldots, T_{d-1}$ are all equal to 1 then there exists a joint probability distribution which can be expressed only with the
conditional copula densities assigned to $T_1,\ldots ,T_{k-1}$:
\[
\begin{array}{l}
f\left(  x_{1},\ldots,x_{d}\right)  =\left[  \prod\limits_{i=1}^{d}%
f_{i}\left(  x_{i}\right)  \right]  \left[  \prod\limits_{\left(  l_{i}%
m_{i}\right)  \in E_{1}}^{{}}c_{l_{i}m_{i}}\left(  F_{l_{i}}\left(  x_{l_{i}%
}\right)  ,F_{m_{i}}\left(  x_{m_{i}}\right)  \right)  \right]  \cdot\\
\prod\limits_{l=2}^{k-1}\prod\limits_{e_{i}^{l},e_{j}^{l}\in N\left(
T_{l}\right)  }^{{}}c_{a_{i,j}^{l},b_{i,j}^{l}|S_{ij}^{l}}
\left(F_{a_{i,j}^{l}|S_{ij}^{l}}\left(  x_{a_{i,j}^{l}}|\mathbf{x}_{S_{ij}^{l}%
}\right)  ,F_{b_{i,j}^{l}|S_{ij}^{l}}\left(  x_{b_{i,j}^{l}}|\mathbf{x}%
_{S_{ij}^{l}}\right)  |\mathbf{x}_{S_{ij}^{l}}\right)\end{array}
\]
where $e_{i}^{l},e_{j}^{l}\in N\left(  T_{l}\right)  $ denotes that $e_{i}%
^{l},e_{j}^{l}$ are linked in the cherry tree $T_{l}$, and $S_{ij}^{l}%
,a_{i,j}^{l},b_{i,j}^{l},$ are defined by (\ref{eq:eq7a}) and $F_{a_{i,j}^{l}|S_{ij}^{l}}%
$ is defined in similar way as in Theorem \ref{theo:theo2.9}.
\end{thm}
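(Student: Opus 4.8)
The plan is to read the conclusion off the central R-vine theorem (Theorem~\ref{theo:theo2.9}), using the hypothesis only to annihilate the factors coming from the top trees; no separate computation of the density should be needed. By hypothesis the conditional copula densities attached to the edges of $T_k,\ldots,T_{d-1}$ are identically $1$, which by Remark~\ref{rem:rem5.1} is exactly the statement that the corresponding conditioned pairs are conditionally independent given their conditioning sets. Such unit densities are admissible (independence) pair-copulas, so Theorem~\ref{theo:theo2.9} applies without change: there exists a unique joint density, given by (\ref{eq:eq4}) and equivalently by the cherry-vine form (\ref{eq:eq9}) of Theorem~\ref{theo:theo4.2}. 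This already settles the existence part of the statement.

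Next I would rewrite (\ref{eq:eq4}) in the tree-indexed form (\ref{eq:eq9}), so that the pair-copula factors are grouped according to the tree $T_l$ they belong to. Every factor indexed by $l=k,\ldots,d-1$ equals $1$ by hypothesis, hence disappears from the product; what survives is precisely the asserted product over $l=2,\ldots,k-1$, multiplied by the univariate marginals $\prod_i f_i(x_i)$ and by the $T_1$ pair-copulas. Thus the reduction of the formula is immediate once the factors are correctly grouped.

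The step that requires genuine care, and which I expect to be the main obstacle, is to confirm that the conditional distribution functions $F_{a_{ij}^{l}\mid S_{ij}^{l}}$ appearing in the retained terms are still computed correctly, i.e. that dropping the top trees does not silently change the arguments of the kept copulas. Here I would appeal to the recursive h-function formula in Theorem~\ref{theo:theo2.9}: a conditional distribution whose conditioning set has cardinality $m$ is built from a pair-copula whose conditioning set has cardinality $m-1$ (a copula of tree $T_m$) together with two conditional distributions carrying one fewer conditioning variable. Unwinding this recursion shows that each $F_{a_{ij}^{l}\mid S_{ij}^{l}}$ with $|S_{ij}^{l}|=l-1\le k-2$ depends only on pair-copulas of the trees $T_1,\ldots,T_{l-1}$, all of which lie among $T_1,\ldots,T_{k-1}$ and are therefore untouched by the truncation. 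Consequently every retained term is evaluated exactly as in the untruncated vine, the displayed formula is internally consistent, and by Theorem~\ref{theo:theo2.9} it is the density of a genuine joint distribution expressed using only the copulas assigned to $T_1,\ldots,T_{k-1}$, as required.
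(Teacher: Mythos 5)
Your argument is correct and follows exactly the route the paper intends: the paper offers no written proof, presenting the theorem simply as ``an important consequence of Theorem~\ref{theo:theo2.9},'' and your proposal is the proper elaboration of that claim. The point you single out as needing care --- that the h-function recursion for each retained $F_{a_{ij}^{l}\mid S_{ij}^{l}}$ only ever invokes pair-copulas from trees $T_1,\ldots,T_{l-1}\subseteq\{T_1,\ldots,T_{k-1}\}$, so the truncation cannot corrupt the surviving factors --- is precisely the detail the paper leaves implicit, and you have verified it correctly.
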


The following definition of {\it truncated vine at level k} is given in
Brechmann et al. (2012).

\begin{defn}\label{def:def5.3}
A \textit{pair-wisely truncated R-vine at level k} (or truncated R-vine at level $k$) is a special R-vine copula with the property that all pair-copulas
with conditioning set equal to, or larger than $k$, are set to bivariate independence copulas.
\end{defn}

We call the attention that Brechmann denotes the first tree $T_0$. To be consistent with our earlier notations we will denote the first tree by $T_1$.

In their approach Brechmann et al. (2012), construct the truncated vines by choosing in the first $k$-trees the strongest Kendall-tau between the variables. In the last trees the pair-copulas were set to one. We claim that the strong dependences in the lower trees do not imply conditional independences in the last trees in general. This is easy to understand because of the great number of possibilities to build the last trees, starting from the same first trees. 

Another approach, which is much closer to ours, is given in Kurowicka (2011). Her idea was building trees with lowest dependence (conditional independences) in the top trees, starting with the last tree (node). Her method uses partial correlations which in case of Gaussian copula are theoretical well grounded.

There arise the following questions. What special properties has the probability
distribution, if we set to 1 the conditional copula densities associated to
the trees $T_k,\ldots ,T_{d-1}$ of its R-vine? Which are the conditional independences 
encoded in the obtained copula.

If the conditional copulas associated to the tree $T_3$ of Figure \ref{fig:fig3}:
\begin{equation}\label{eq:eq10}
\begin{array}{l}
c_{1,4|2,3}\left( F_{1|23}\left( x_1|x_2,x_3\right), F_{4|2,3}\left(x_4|x_2,x_3\right) \right), \\
c_{4,6|2,3}\left( F_{1|23}\left(x_1|x_2,x_3\right), F_{4|2,3}\left( x_4|x_2,x_3\right) \right), \\
c_{2,5|3,4}\left( F_{2,5|3,4}\left( x_2|x_3,x_4\right)), F_{5|3,4}\left(x_5|x_3,x_4\right) \right), \\ 
\end{array}
\end{equation}
are equal to 1, these imply the following conditional independences between the variables:
\begin{equation}\label{eq:eq11}
\begin{array}{lll}
X_1\perp X_4|X_2,X_3; & X_4\perp X_6|X_2,X_3; & X_2\perp X_5|X_3,X_4. \\
\end{array}
\end{equation}
In this case the junction tree copula
associated to $T_3$ in Figure \ref{fig:fig3} gives the expression of the multivariate copula as a cherry-tree copula.

\begin{figure}[!ht]
\centering\includegraphics[bb=150 640 510 760,width=6cm]{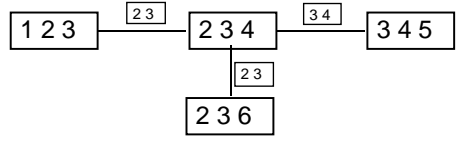}
\caption{$3$-rd order cherry junction tree}
\label{fig:fig3}
\end{figure}

The cherry-tree copula density assigned to the truncated R-vine structure in Figure \ref{fig:fig2} is:

\[
\begin{array}{l}
f\left( x_1,x_2,x_3,x_4,x_5,x_6\right) = \\
= \left( \prod\limits_{i=1}^6f\left(x_i\right) \right) \cdot c_{1,2}\left( F_1\left( x_1\right) ,F_2\left( x_2\right)\right) \cdot c_{2,3}\left( F_2\left( x_2\right) ,F_3\left( x_3\right)\right)  \\
\cdot c_{2,6}\left( F_2\left( x_2\right) ,F_6\left( x_6\right)\right) \cdot c_{3,4}\left( F_3\left( x_3\right) ,F_4\left( x_4\right) \right)
\cdot c_{4,5}\left( F_4\left( x_4\right) ,F_5\left( x_5\right) \right)\\
\cdot c_{1,3|2}\left( F_{1|2}\left( x_1|x_2\right) ,F_{3|2}\left( x_3|x_2\right)\right) 
\cdot c_{3,6|2}\left( F_{3|2}\left( x_3|x_2\right) ,F_{6|2}\left(x_6|x_2\right) \right) \\
\cdot c_{2,4|3}\left( F_{2|3}\left( x_2|x_3\right) ,F_{4|3}\left( x_4|x_3\right)\right) 
\cdot c_{3,5|4}\left( F_{3|4}\left( x_3|x_4\right) ,F_{5|4}\left(x_5|x_4\right) \right). \\
\end{array}
\]

Let us recall the following results. 
\begin{thm}\label{theo:theo5.6a} (Kov{\'a}cs, Sz{\'a}ntai (2012a))
A general $k$-width junction tree copula (see Definition \ref{def:def3.2}) can be expressed as a $k$ order cherry-tree copula. 
\end{thm}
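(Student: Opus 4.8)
The plan is to pass from copulas to densities via Theorem \ref{theo:theo3.1}, reduce the statement to a purely graph--theoretic completion result, and then exploit monotonicity of the Global Markov property under edge addition. Recall that, by the correspondence recalled in Subsection~\ref{subsec:subsec2.1}, a $k$-width junction tree is the clique tree of a connected chordal graph $G$ on $V$ whose clusters (maximal cliques) have size at most $k$, at least one of them having size exactly $k$; equivalently, $G$ has treewidth $k-1$. The junction tree distribution $f$ satisfies the Global Markov property with respect to $G$ by construction, since for decomposable graphs the factorisation (\ref{eq:eq0}) is equivalent to that property. Moreover Theorem \ref{theo:theo3.1} shows that the junction tree copula density (Definition \ref{def:def3.2}) is completely determined by the factorisation (\ref{eq:eq5}) of $f$ over the clusters and separators of the tree. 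Hence it suffices to exhibit a $k$-order cherry tree (Definition \ref{def:def2.2}), with cluster set $\Gamma'$ and separator set $\mathcal{S}'$, such that the same density factorises as $f=\prod_{K\in\Gamma'}f_K\big/\prod_{S\in\mathcal{S}'}f_S^{\nu_S-1}$; applying Theorem \ref{theo:theo3.1} to this second tree then yields exactly the cherry-tree copula form (\ref{eq:eq8}).

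First I would carry out the graph-theoretic completion, namely that any connected chordal graph $G$ of treewidth $k-1$ is a spanning subgraph of a $(k-1)$-tree $G'$ on the same vertex set $V$, and that the clique tree of a $(k-1)$-tree is precisely a $k$-order cherry tree (all clusters of size $k$, all separators of size $k-1$). I would argue constructively from a perfect elimination ordering $v_1,\dots,v_d$ of $G$: since the treewidth is $k-1$, each vertex has at most $k-1$ neighbours among the vertices processed before it in the reverse ordering. Processing $v_d,\dots,v_1$ in turn, I would add fill-in edges so that the first $k$ processed vertices form a $k$-clique and every later vertex $v_i$ is joined to exactly $k-1$ already-processed vertices that form a clique (its existing higher-neighbour clique, of size at most $k-1$, always extends to such a $(k-1)$-clique inside some size-$k$ clique of the current $(k-1)$-tree). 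The resulting graph $G'$ contains $G$, is a $(k-1)$-tree, and its clique tree has all clusters of size $k$ and all separators of size $k-1$, i.e. it is a $k$-order cherry tree.

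The probabilistic step is then short. Because $G\subseteq G'$ (only edges were added), every path of $G$ is a path of $G'$, so whenever a set $C$ separates $A$ from $B$ in $G'$ it also separates them in $G$; consequently the Global Markov property of $f$ with respect to $G$ forces the Global Markov property with respect to $G'$. Since $G'$ is decomposable, this Markov property is equivalent to the factorisation of $f$ over the clusters and separators of the $k$-order cherry tree, with all marginal factors being marginals of the original $f$. Feeding this factorisation into Theorem \ref{theo:theo3.1} gives $c_{\mathbf{X}}(\mathbf{u}_V)=\prod_{K\in\Gamma'}c_{\mathbf{X}_K}(\mathbf{u}_K)\big/\prod_{S\in\mathcal{S}'}[c_{\mathbf{X}_S}(\mathbf{u}_S)]^{\nu_S-1}$, which is exactly the cherry-tree copula (\ref{eq:eq8}) of order $k$.

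I expect the completion lemma to be the main obstacle: one must verify not only that the added fill-in edges never create a clique larger than $k$ (so that the treewidth stays $k-1$ and every cluster has size exactly $k$), but also that all separators come out of size exactly $k-1$, so that the clique tree of $G'$ genuinely satisfies Definition \ref{def:def2.2}. The monotonicity of the Markov property under edge addition and the final substitution into Theorem \ref{theo:theo3.1} are routine once the cherry tree $G'$ has been produced.
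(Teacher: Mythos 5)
The paper itself gives no proof of this theorem --- it is recalled verbatim from Kov\'{a}cs and Sz\'{a}ntai (2012a) --- so there is no in-paper argument to compare yours against line by line. Judged on its own, your proof is correct and essentially complete at the level of rigour of the surrounding text. Your route is the graph-theoretic one: a $k$-width junction tree is the clique tree of a connected chordal graph of treewidth $k-1$, every such graph is a spanning subgraph of a $(k-1)$-tree (whose clique tree is exactly a $k$ order cherry-tree in the sense of Definition \ref{def:def2.2}), separation is antitone under edge addition so the Global Markov property transfers from $G$ to $G'$, and for decomposable graphs that property is equivalent to the clique--separator factorization (\ref{eq:eq0}), after which Theorem \ref{theo:theo3.1} converts the density factorization into the copula form (\ref{eq:eq8}). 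The authors' own machinery points to a different, more hands-on route: as in Theorem \ref{theo:theo5.5} and its use of Definition \ref{def:def4.1}, one enlarges each undersized cluster directly in the factorization by absorbing variables from a neighbouring cluster, checking at each step that the telescoping of marginals preserves the identity $f=\prod_K f_K/\prod_S f_S^{\nu_S-1}$; this is more elementary, stays entirely at the level of the density formula, and exhibits the enlarged marginals explicitly as marginals of $f$, at the cost of a fiddlier induction. What your approach buys is modularity: the combinatorial content is isolated in the standard ``treewidth at most $k-1$ iff partial $(k-1)$-tree'' completion, and the probabilistic content in one appeal to the Markov/factorization equivalence. The two points that deserve an explicit citation or remark in your write-up are (i) that equivalence for decomposable graphs (Lauritzen (1996)), which in the continuous case needs the existence of the marginal densities of $f$ on the new, larger clusters, and (ii) the degenerate case $d=k$, where the cherry-tree has a single cluster and no separators, which is trivially covered.
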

This theorem shows why the $k$ order cherry-tree copulas are so powerful in multivariate copula modeling.

Another important result is given by the following theorem.

\begin{thm}\label{theo:theo5.5} (Kov{\'a}cs, Sz{\'a}ntai (2013))
A $k$ order cherry-tree copula can be expressed
as a $\left( k+1\right) $ order cherry-tree copula.
\end{thm}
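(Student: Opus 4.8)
The plan is to exhibit an explicit $(k+1)$ order cherry-tree whose associated cherry-tree copula, formed from the marginals of the probability distribution $f$ that the given $k$ order cherry-tree determines, coincides with the given copula (\ref{eq:eq8}). By Theorem \ref{theo:theo3.1} it will be enough to show that $f$ is the junction tree probability distribution (\ref{eq:eq0}) of this new structure; the copula identity then follows by dividing out the univariate marginals exactly as in (\ref{eq:eq5}).

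First I would root the $k$ order cherry-tree at an arbitrary cluster $K_1$ and order the clusters $K_1,\dots,K_m$ along a running-intersection ordering, so that every non-root cluster $K_j$ has a unique parent $K_{p(j)}$, a parent separator $S_j=K_j\cap K_{p(j)}$ of size $k-1$, and a single residual vertex $r_j$ with $K_j=S_j\cup\{r_j\}$. Writing $\{t_j\}=K_{p(j)}\setminus S_j$, the rooted form of (\ref{eq:eq0}) gives
\[
f=f_{K_1}\prod_{j=2}^{m}\frac{f_{K_j}}{f_{S_j}}=f_{K_1}\prod_{j=2}^{m}f_{r_j\mid S_j}.
\]
The key step is the conditional independence $X_{r_j}\perp X_{t_j}\mid \mathbf{X}_{S_j}$, which permits replacing $f_{r_j\mid S_j}$ by $f_{r_j\mid K_{p(j)}}=f_{K_{p(j)}\cup\{r_j\}}/f_{K_{p(j)}}$. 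Substituting and telescoping, the leading factor $f_{K_1}$ is absorbed into the denominators contributed by the children of the root, and one obtains
\[
f=\frac{\prod_{(K_{p(j)},K_j)}f_{K_{p(j)}\cup K_j}}{\prod_{K\in\mathcal{C}_{ch}}\left[f_K\right]^{\deg(K)-1}},
\]
the numerator running over the edges of the original tree and $\deg(K)$ being the number of clusters adjacent to $K$. I would then read off that this is precisely the junction tree distribution (\ref{eq:eq0}) of the structure $T^{(k+1)}$ whose clusters are the unions $K_{p(j)}\cup K_j$ (each of size $k+1$, one per edge of the original tree, as in Definition \ref{def:def4.1}) and whose separators are the original clusters $K$ (each of size $k$, with multiplicity $\nu_K-1=\deg(K)-1$). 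Applying Theorem \ref{theo:theo3.1} converts this into the copula identity (\ref{eq:eq8}) for $T^{(k+1)}$, which is the assertion.

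The conditional independence $X_{r_j}\perp X_{t_j}\mid \mathbf{X}_{S_j}$ I would derive from the Global Markov property: in the chordal graph underlying the $k$ order cherry-tree, the clusters containing $r_j$ form a connected subtree lying inside the branch rooted at $K_j$ (since $r_j$ is first introduced at $K_j$ and $r_j\notin K_{p(j)}$), while those containing $t_j$ stay on the parent side of the edge $(K_{p(j)},K_j)$ (since $t_j\in K_{p(j)}$ and $t_j\notin K_j$); hence $S_j$ separates $r_j$ from $t_j$, and by Remark \ref{rem:rem5.1} the associated conditional copula equals $1$.

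The more delicate obstacle, which I expect to carry most of the technical weight, is verifying that $T^{(k+1)}$ is a genuine $(k+1)$ order cherry-tree: that its clusters all have size $k+1$ and its separators all have size $k$, that no new cluster contains another, and that the running intersection property holds, so that the sets $K_{p(j)}\cup K_j$ are exactly the maximal cliques of a chordal graph with no clique of size $k+2$. I would handle this by transporting the running intersection ordering of the original tree to its edge set and observing that two new clusters can intersect only in a common original cluster, which has size exactly $k$; the same observation rules out a coincidence $K_{p(i)}\cup K_i=K_{p(j)}\cup K_j$ for $i\neq j$, since such a collision would already force a clique of size $k+1$ in the original structure, contradicting that it has order $k$.
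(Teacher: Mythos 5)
The paper itself offers no proof of this theorem: it is recalled from Kov\'{a}cs and Sz\'{a}ntai (2013) and used as a black box, so there is no in-paper argument to compare yours against. That said, your proof is essentially the one the surrounding machinery presupposes --- the $(k+1)$ order structure you build is exactly the ``join linked clusters'' step of Definition \ref{def:def4.1}, and the same construction is invoked in the proof of Theorem \ref{theo:theo5.9} --- and its probabilistic core is sound. The rooted telescoping of (\ref{eq:eq0}), the separation of $R_j=\{r_j\}$ from $(K_1\cup\dots\cup K_{j-1})\setminus S_j$ by $S_j$ (stated explicitly in Subsection \ref{subsec:subsec2.1}) which yields $X_{r_j}\perp X_{t_j}\mid \mathbf{X}_{S_j}$ and hence $f_{r_j\mid S_j}=f_{r_j\mid K_{p(j)}}$, the exponent bookkeeping $\nu_K-1=\deg(K)-1$, and the final appeal to Theorem \ref{theo:theo3.1} applied to the two factorizations of the same density $f$ all check out. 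You also correctly locate where the real work sits: verifying that the edge set of the old tree, with clusters $K_{p(j)}\cup K_j$, carries a genuine $(k+1)$ order cherry-tree. There one must link two new clusters only when they share an old cluster (for each old cluster $K$, choose a spanning tree on its $\deg(K)$ incident edges), after which the running intersection property is best checked via the vertex-connectivity criterion: for each vertex $v$ the old clusters containing $v$ form a subtree, so the old edges meeting that subtree form a connected set of new clusters. Your distinctness argument is fine --- three distinct $k$-element old clusters inside one $(k+1)$-set would make that set a clique of the underlying chordal graph, impossible when the maximal cliques have size $k$. Two minor cautions: the claim that ``two new clusters can intersect only in a common original cluster'' holds only for adjacent new clusters (non-adjacent ones can intersect in a $(k-1)$-set), so do not run the RIP check through pairwise intersections alone; and the statement tacitly requires the cherry-tree to have at least two clusters, i.e.\ $d>k$, since a one-cluster tree admits no $(k+1)$ order refinement.
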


As a consequence of Theorem \ref{theo:theo5.5} we have the following theorem.

\begin{thm}\label{theo:theo5.7} (Kov{\'a}cs, Sz{\'a}ntai (2013))
Any copula having a structure of truncated vine at level $k$ is a $k$ order
cherry-tree copula.
\end{thm}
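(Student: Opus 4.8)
The plan is to translate the truncation into the cherry-vine factorization of Definition \ref{def:def4.1} and then recognize the surviving product of pair-copulas as a junction-tree (cherry-tree) copula of the form (\ref{eq:eq8}).

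First I would record the effect of the truncation. Reading the R-vine as a cherry-vine, Theorem \ref{theo:theo4.2} factors its density as the univariate marginals, the unconditional pair-copulas on the edges of $T_1$, and the conditional pair-copulas attached to the links of the cherry-trees $T_2,\ldots,T_{d-1}$ (the links of $T_l$ carrying conditioning sets of size $l-1$). By Theorem \ref{theo:theo5.2} a truncation at level $k$ keeps only the factors assigned to $T_1,\ldots,T_{k-1}$, all higher conditional pair-copulas being set to $1$; dividing by the univariate marginals leaves the copula density
\[
c_{\mathrm{trunc}}(\mathbf{u}_V)=\prod_{(l_i,m_i)\in E_1}c_{l_i,m_i}\cdot\prod_{l=2}^{k-1}\prod_{e_i^l,e_j^l\in N(T_l)}c_{a_{i,j}^l,b_{i,j}^l|S_{ij}^l},
\]
a product of pair-copulas whose conditioning sets have size at most $k-2$.

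The target is to identify $c_{\mathrm{trunc}}$ with the order-$k$ cherry-tree copula (\ref{eq:eq8}) built on the cherry-tree $T_k$, whose size-$k$ clusters are, by Definition \ref{def:def4.1}, the unions of the linked size-$(k-1)$ clusters of $T_{k-1}$, and whose links are exactly the conditional pair-copulas first set to $1$ by the truncation. I would prove the identity by induction on $k$ using Theorem \ref{theo:theo5.5}. The base case $k=2$ is immediate: retaining only $T_1$ leaves $\prod_{(l_i,m_i)\in E_1}c_{l_i,m_i}$, which is the order-$2$ cherry-tree copula of $T_2$ because its separators are single vertices whose univariate copula densities equal $1$. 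For the inductive step, the product over $T_1,\ldots,T_{k-2}$ is by hypothesis an order-$(k-1)$ cherry-tree copula; Theorem \ref{theo:theo5.5} rewrites it as an order-$k$ cherry-tree copula, and choosing that cherry tree to be $T_k$ and matching the rewriting against the cherry-vine construction shows that the new conditional pair-copulas appearing are precisely those on the links of $T_{k-1}$, which completes the product into $c_{\mathrm{trunc}}$. As the result is by Definition \ref{def:def3.3} an order-$k$ cherry-tree copula, the theorem follows.

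The main obstacle is the combinatorial bookkeeping concealed in the inductive step, equivalently in a direct expansion of (\ref{eq:eq8}). Expanding each size-$k$ cluster copula $c_K$ through the pair-copula construction and dividing by the separator copulas $c_S$ raised to the power $\nu_S-1$, one must check that every pair-copula of the lower trees survives with net exponent exactly one and that no spurious factors remain. This telescoping is what the running intersection property guarantees: each separator of $T_k$ has size $k-1$ and is shared by $\nu_S$ clusters, and the rule that every size-$k$ cluster is the union of two linked size-$(k-1)$ clusters sharing a size-$(k-2)$ separator forces the over-counting in the numerator to be cancelled exactly by the denominator. Verifying this cancellation in general, rather than on an example such as Figure \ref{fig:fig2}, is the crux; the remaining steps are direct applications of Theorems \ref{theo:theo4.2}, \ref{theo:theo5.2} and \ref{theo:theo5.5}.
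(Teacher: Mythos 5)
The paper does not actually prove this theorem: it is imported from Kov\'acs and Sz\'antai (2013) and justified only by the remark that it is ``a consequence of Theorem~\ref{theo:theo5.5}''. Your overall route --- read the truncated R-vine as a cherry-vine, strip the trees $T_k,\ldots,T_{d-1}$ via Theorem~\ref{theo:theo5.2}, and show by induction that the partial product over $T_1,\ldots,T_{m-1}$ equals the $m$ order cherry-tree copula on $T_m$ --- is exactly the argument the paper is alluding to, so in spirit you are on the paper's track.

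Two things keep the proposal from being a complete proof. First, the inductive step cannot be delegated to Theorem~\ref{theo:theo5.5} as you invoke it: that theorem embeds a $(k-1)$ order cherry-tree copula into \emph{some} $k$ order cherry-tree copula by introducing \emph{independence} conditional pair-copulas, whereas here the new factors attached to the links of $T_{k-1}$ are the genuine, nontrivial conditional pair-copulas of the vine, and the target cherry tree must be the specific $T_k$ of the vine, not one of your choosing. What the step actually requires is the multiplicative identity
\[
\frac{\prod_{K\in E_k} c_K}{\prod_{S\in\mathcal{S}_k} c_S^{\nu_S-1}}
=\frac{\prod_{e\in E_{k-1}} c_e}{\prod_{S'\in\mathcal{S}_{k-1}} c_{S'}^{\nu_{S'}-1}}
\cdot\prod_{(e_i,e_j)\in N(T_{k-1})} c_{a_{ij},b_{ij}\mid S_{ij}},
\]
where $\mathcal{S}_k,\mathcal{S}_{k-1}$ denote the separator sets of $T_k$ and $T_{k-1}$. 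It follows from $c_{a,b\mid S}=c_{e_i\cup e_j}\,c_{S}/(c_{e_i}c_{e_j})$ together with three counting facts: each cluster of $T_k$ corresponds to exactly one link of $T_{k-1}$; a cluster $e$ of $T_{k-1}$ of degree $d_e$ occurs as a separator of $T_k$ with multiplicity $\nu_e-1=d_e-1$, so its net exponent $1-d_e$ is correct (and vanishes for leaves); and a separator $S'$ of $T_{k-1}$ is used by exactly $\nu_{S'}-1$ links, so the $c_{S'}$ factors produced by the identity cancel the denominator of the $(k-1)$ order copula. Second, you explicitly flag this cancellation as ``the crux'' and leave it unverified; since it is essentially the entire mathematical content of the theorem, the proof is not complete as written. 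Supplying the identity above (or citing the analogous bookkeeping from the proof of Theorem~\ref{theo:theo5.5} in the 2013 reference) closes the gap.
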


\begin{rem}\label{rem:remu1}
Since truncated R-vine copula is a cherry-tree copula it is defined by
formula (\ref{eq:eq8}), where the set of clusters and separators are defined only by the
top tree. 
\end{rem}

From this follows, that the top tree is independent of the sequence of trees which
led to it. These trees can be constructed in multiple ways, i.e. there are many sequencies of cherry-trees leading to the top tree, therefore in our
opinion the good sequence of cherry trees is not necessarily the sequence which greedy way maximizes associations in the lower trees.
Rather we claim that it is more useful to choose that sequence which uses those pair-copulas which can be well modeled.

\begin{thm}\label{theo:theo5.8}
A $k$ order cherry-tree copula is a
truncated R-vine copula if and only if its separators define a $(k-1)$ order
cherry-tree.
\end{thm}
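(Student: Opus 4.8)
The plan is to translate the statement entirely into the language of the cherry-vine of Definition~\ref{def:def4.1}. By the equivalence of Definitions~\ref{def:def4.1} and~\ref{def:def2.8} together with Theorem~\ref{theo:theo5.7}, a copula is a truncated R-vine precisely when its $k$ order cherry tree occurs as the top tree $T_k$ of some cherry-vine sequence $T_1,\ldots ,T_{d-1}$ with all pair-copulas carried by the trees above $T_{k-1}$ set to $1$. Since the trees $T_{k+1},\ldots ,T_{d-1}$ can always be produced from $T_k$ by Theorem~\ref{theo:theo5.5}, the whole question reduces to whether the sequence can be continued \emph{downward}, i.e. whether $T_k$ can be obtained from a $(k-1)$ order cherry tree $T_{k-1}$ by merging linked clusters in the sense of Definition~\ref{def:def4.1}. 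So I would isolate the single-step claim --- a $k$ order cherry tree is the merge of some $(k-1)$ order cherry tree if and only if its separators form a $(k-1)$ order cherry tree --- and then iterate it.

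For the forward (``only if'') direction I would start from a given $T_{k-1}$ realizing $T_k$. Each cluster of $T_k$ has the form $A\cup B$ with $A,B$ adjacent clusters of $T_{k-1}$ and $|A\cap B|=k-2$, so the clusters of $T_k$ correspond bijectively to the links of $T_{k-1}$, and by the proximity condition two of them are adjacent in $T_k$ exactly when the corresponding links of $T_{k-1}$ meet in a common cluster $A$. A short computation using the running intersection property of $T_{k-1}$ then shows that the separator between two such adjacent clusters of $T_k$ equals precisely the shared cluster $A$. Hence the separators of $T_k$ are exactly the clusters of $T_{k-1}$ incident to at least two links, i.e. the non-leaf clusters of $T_{k-1}$. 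Since the non-leaf clusters of a tree induce a connected subtree, and each has size $k-1$ with pairwise intersections (the separators of $T_{k-1}$) of size $k-2$, this subtree is itself a $(k-1)$ order cherry tree, which is the asserted separator cherry tree.

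For the reverse (``if'') direction I would reconstruct a predecessor $T_{k-1}$ from the separator cherry tree $\mathcal{T}$. I would take the clusters of $\mathcal{T}$ as the internal clusters of $T_{k-1}$, and for every pendant cluster $C$ of $T_k$, joined to the rest through a single separator $S\subset C$ with $C=S\cup\{x\}$, I would attach a new leaf cluster $(S\setminus\{a\})\cup\{x\}$ to $S$, choosing the removed vertex $a\in S$ so that the junction-tree property is preserved. One then checks that merging the linked clusters of this $T_{k-1}$ reproduces exactly $T_k$ with the correct adjacencies (here the bijection ``clusters of $T_k$ $\leftrightarrow$ links of $T_{k-1}$'' and the proximity condition must be verified), and finally one continues the construction one level further down --- applying the same step to $T_{k-1}$ --- until one reaches a genuine first tree $T_1$; this is exactly the Backward Algorithm of the paper.

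The main obstacle I anticipate lies entirely in this reverse/downward part. Verifying that the reconstructed $T_{k-1}$ satisfies the running intersection property (equivalently, that the merges satisfy the proximity condition and create no spurious cycle or repeated element) requires care, particularly because an internal cluster of $T_{k-1}$ of degree $\delta$ produces a separator of $T_k$ with multiplicity $\delta-1$, so the correspondence must be set up for the multiset of separators, not merely the distinct ones. The genuinely delicate point, however, is that continuing the reconstruction all the way to $T_1$ makes the separator condition reappear at each lower level, so the induction must be organized so that the $(k-1)$ order separator cherry tree one obtains can again be decomposed; controlling the choices of the removed vertices $a$ so that this propagation succeeds and the algorithm terminates in a valid $T_1$ is the crux of the argument.
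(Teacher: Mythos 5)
Your overall plan is sound and, for the ``if'' direction, essentially reproduces the paper's own constructive proof: the predecessor tree you build (separator clusters as internal nodes, plus a leaf $(S\setminus\{a\})\cup\{x\}$ for each pendant cluster $S\cup\{x\}$) is exactly one pass of the paper's Backward Algorithm, and the crux you flag --- choosing the deleted vertices $a$ so that the separator condition reappears one level down and the recursion can continue to $T_1$ --- is precisely the point the paper addresses by the rule that the deleted vertex must be non-simplicial and must be \emph{the same} for all leaf clusters hanging off a common cluster; with that rule in place your iteration goes through. Where you genuinely diverge from the paper is the ``only if'' direction. The paper proves it only by exhibiting a single $4$ order cherry tree (Figure~\ref{fig:fig5}) whose separators do not form a cherry tree and checking by case analysis that no $T_3$ can produce it; the general statement is then covered separately by Lemma~\ref{lemm:lemm3.1} (each cluster meets its neighbours in at most two distinct separators, since three separators of one cluster cannot satisfy the running intersection property). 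Your argument --- that the separators of $T_k$ are exactly the non-leaf clusters of $T_{k-1}$ (using the running intersection property to show $(A\cup B)\cap(A\cup C)=A$), and that the non-leaf clusters of a junction tree induce a connected subtree which is itself a $(k-1)$ order cherry tree --- is a direct, fully general structural proof and is arguably cleaner than the paper's example-plus-lemma route. One small caution: the adjacency structure of a cherry tree on a given cluster set is not unique (only the multiset of separators is determined by the copula), so your claimed bijection between clusters of $T_k$ and links of $T_{k-1}$, and the identification of $T_k$-adjacency with ``the two links share a cluster,'' should be phrased at the level of the separator multiset rather than a fixed drawing of the tree; this is the same representational issue the paper itself raises just before Theorem~\ref{theo:theo5.9}.
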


\begin{proof}
The first implication is that if the separators of the tree $T_{k}$ form a $(k-1)$ order cherry-tree,
then the $k$ order cherry-tree can be expressed as a truncated R-vine.
For this statement we give a constructive proof by the following algorithm.

We will show that there exists a sequence of cherry-trees which leads to the given $k$-th order cherry-tree. This means that the $k$-th order cherry-tree is a truncated R-vine at level $k$.

\begin{alg}\label{alg:alg5.9} 
{\bf Backward Algorithm.}

Algorithm for obtaining a truncated R-vine structure from a cherry-tree structure.

{\it Input}: A $k$ order cherry-tree graph structure, i.e. a set of clusters of size \textit{k}
and the set of separators of size \textit{k}-1 enhanced with the property that the separators define a $k-1$ order cherry-tree. 

{\it Output}: A R-vine truncated at level $k$.

We obtain recursively an $\left( m-1\right) $ width cherry-tree
from an $m$-width cherry-tree, for $m=k, \ldots, 1$ by the following two steps:

\begin{itemize}
\item  Step 1. The separators of the $m$-width cherry-tree will be the
clusters in the $(m-1)$-width cherry-tree, which will be linked if between them
is one cluster in the $m$-width cherry-tree, and they are different.

\item  Step 2. The leaf clusters (those clusters which contain a simplicial
node) are transformed into $(m-1)$-width clusters, by deleting one node which is
not simplicial. We emphasize here that it is essential to delete the same node from all leaf clusters which are connected to the same cluster. This guaranties that the $m-1$ order cherry-tree structure obtained is enhanced with the property that its separators define an $m-2$ order cherry-tree. The $m-1$-width cluster obtained in this way will be connected to one of
the clusters obtained in Step 1, which was the $m-1$-width separator linked to it in the 
$m$-width cherry-tree.
\end{itemize}
\end{alg}

An application of this algorithm can be seen in Figure \ref{fig:fig4}.

\begin{figure}[!ht]
\centering\includegraphics[bb=50 70 520 760,width=8cm]{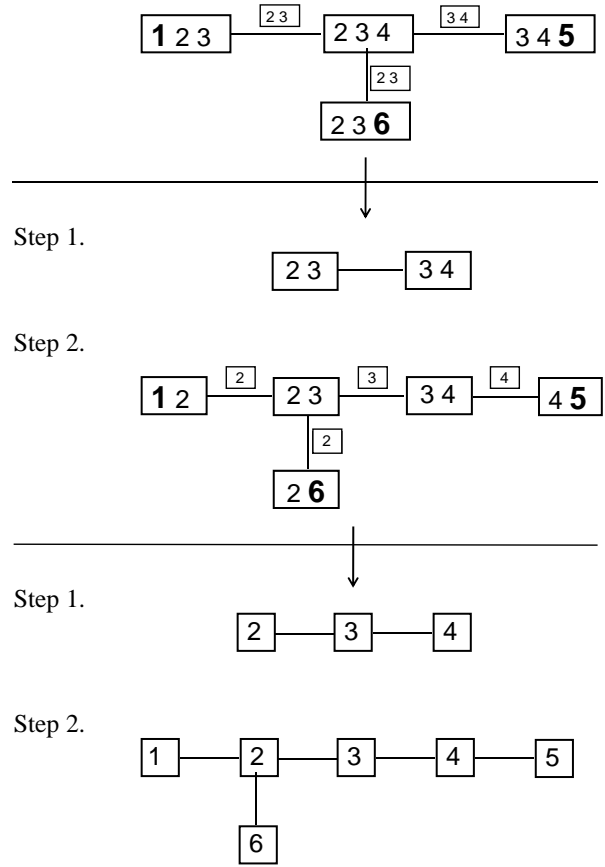}
\caption{Application of Algorithm 1 to a given 3-rd order cherry-tree in order to obtain a truncated R-vine at level 3 which leads to it.}
\label{fig:fig4}
\end{figure}

Now we prove the other implication: If the $k$ order cherry-tree copula can be
expressed by an R-vine truncated at level $k$ then the separators define
a $(k-1)$ order cherry-tree. We prove this by proving an equivalent
statement. If the separators do not define a $(k-1)$ order
cherry-tree, then it cannot be expressed as an R-vine truncated at
level $k$. We prove this on the example in Figure \ref{fig:fig5}.

\begin{figure}[!ht]
\centering\includegraphics[bb=80 660 500 780,width=8cm]{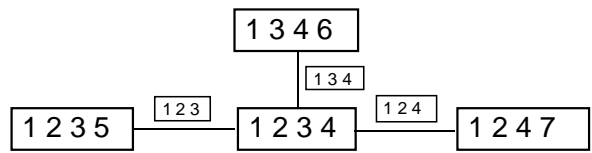}
\caption{A 4 order cherry-tree copula which cannot be achieved as a truncated R-vine}
\label{fig:fig5}
\end{figure}

Let $T_4$ be the 4 order cherry-tree in Figure \ref{fig:fig5}.  Its separators 
do not define a 3-rd order cherry-tree. We will prove, that there does not exist any 3-rd order
cherry-tree $T_3$ with the property that $T_4$ can be obtained from it by
Defintion \ref{def:def4.1}, which means that there does not exist a truncated R-vine structure
which leads to it.  

We will show that there does not exist a $T_3$ cherry-tree
with clusters in $E_3$, such that the clusters in $E_4=\left\{ \left(
1,2,3,5\right) ,\left( 1,3,4,6\right) ,\left( 1,2,3,4\right) ,\left(
1,2,4,7\right) \right\}$, could be obtained by the union of two linked
clusters belonging to $E_3$.

There are two possibilities:
\begin{itemize}
\item[1)] The clusters $\left( 1,2,3\right) ,\left( 1,3,4\right) ,\left(
1,2,4\right) $ are clusters of $E_3$. This cannot be the case because the running intersection property
could not be fulfilled.

\item[2)] At least one of these clusters is not in $E_3$. 
Without loss of generality let us suppose that $\left( 1,2,3\right)$ is not a cluster in $E_{3}$.
This means that one of the pairs $(1,2),(2,3)$ and $\left( 1,3\right)$ are not connected in $T_{3}$.

Without loss of generality let us suppose that $\left( 1,2\right) $ are not
linked. By Definition \ref{def:def4.1} this means that $\left( 1,2,3,5\right) $ in $T_4$ can be obtained from the union of
$\left(1,3,5\right) $ and $\left( 2,3,5\right)$ which are linked in $T_3$.

Now there are two sub-cases again. 
\begin{itemize}
\item[2a)]
$(1,3,5)$ is a leaf cluster (only one cluster is connected directly to it, in this case $(2,3,5)$). 
This leads to contradiction because $1$  appears in at least one of other third
order clusters, contained for example in $(1,2,4,7)$. 
\item[2b)]
$(1,3,5)$ is not a leaf cluster. In this case it is linked to another cluster
by $\left( 1,3\right) ,(1,5)$ or $(3,5)$. The other cluster has the form 
$\left( 1,3,k\right) ,(1,5,k)$ or $(3,5,k)$, with $k\in \left\{ 4,6,7\right\} $. 
By Defintion \ref{def:def4.1} the clusters of $T_{4}$ are obtained by the union of
the linked clusters in $T_{3}$. So by taking the union of $(1,3,5)$ with any of the
clusters $\left( 1,3,k\right) ,(1,5,k)$ or $(3,5,k)$ we obtain a 4 order
cluster $(1,3,5,k)$, which also leads to contradiction beause only one
cluster of $T_{4}$ contains $5$ but in this case we would have two clusters $\left( 1,2,3,5\right)$ and $%
(1,3,5,k)$ both of them containing $5$.
\end{itemize}
\end{itemize}
\end{proof}

\begin{defn}\label{def:def3.10}
The truncated R-vine obtained by the Algorithm \ref{alg:alg5.9} (Backward Algorithm) started from a given cherry-tree as the top tree is called cherry-vine structure.
\end{defn}

\begin{rem}\label{rem:rem5.6}
Algorithm \ref{alg:alg5.9} can result more cherry-vine structures as in Step 2 we may proceed in different directions.
\end{rem}

\begin{rem}\label{rem:rem5.10} 
As it can be seen from the proof of Theorem \ref{theo:theo5.8} not every cherry-tree copula is a truncated vine copula.
\end{rem}

\begin{lem}\label{lemm:lemm3.1}
A necessary and sufficient condition for a cherry-tree copula to be a
truncated R-vine copula is that each cluster has to be connected to its neighbors with
at most two different separators.
\end{lem}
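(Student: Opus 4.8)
The plan is to prove the equivalence by reducing it to Theorem~\ref{theo:theo5.8}. Since that theorem already establishes that a $k$ order cherry-tree copula is a truncated R-vine copula if and only if its separators define a $(k-1)$ order cherry-tree, it suffices to show that this condition on the separators is equivalent to the stated condition that every cluster be joined to its neighbours by at most two distinct separators. The basic observation I would start from is that in a $k$ order cherry-tree every cluster $K$ has size $k$ and every separator incident to $K$ has size $k-1$, so each such separator equals $K\setminus\{u\}$ for a single vertex $u\in K$; hence the number of distinct separators incident to $K$ equals the number of distinct vertices of $K$ deleted by an incident edge. Throughout I would work with the candidate lower structure produced by Step~1 of Algorithm~\ref{alg:alg5.9}: its nodes are the distinct separators (each of size $k-1$), and two of them are joined whenever they are distinct and share a common cluster of the given cherry-tree.

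For the easy direction (necessity of the ``at most two'' condition) I would argue by contraposition. Suppose some cluster $K$ is joined to its neighbours by three distinct separators $S_1,S_2,S_3$. By Step~1 of Algorithm~\ref{alg:alg5.9} each pair $S_i,S_j$ shares the common cluster $K$ and is therefore linked, so $S_1,S_2,S_3$ form a triangle in the separator structure. A triangle is a cycle, so the separators cannot form a tree, let alone a $(k-1)$ order cherry-tree. Thus if the separators define a $(k-1)$ order cherry-tree, then each cluster carries at most two distinct separators.

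For the converse (sufficiency) I would assume that every cluster is incident to at most two distinct separators and verify that the separator structure is a genuine $(k-1)$ order cherry-tree. The cluster and separator sizes are automatic: each node has size $k-1$, and if a cluster $K$ carries the two distinct separators $S=K\setminus\{u\}$ and $S'=K\setminus\{v\}$ with $u\neq v$, then $S\cap S'=K\setminus\{u,v\}$ has size $k-2$, exactly as required. Connectedness follows from connectedness of the given cherry-tree $T_k$, since the path in $T_k$ between any two edges passes through clusters whose successive distinct separators are linked. The crux is acyclicity, which I would establish by a lifting argument: for each separator $S$ the clusters of $T_k$ containing $S$ form a connected subtree $T(S)$ of $T_k$ (running intersection for the set $S$), and all interior edges of $T(S)$ carry the label $S$. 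A cycle $S_1-S_2-\cdots-S_r-S_1$ in the separator structure would yield common clusters $K_i$ with $K_i,K_{i+1}\in T(S_{i+1})$; the ``at most two'' hypothesis forces $K_i\neq K_{i+1}$ (otherwise $K_i$ would carry the three distinct separators $S_i,S_{i+1},S_{i+2}$), so joining $K_i$ to $K_{i+1}$ inside $T(S_{i+1})$ gives a nonempty path using only edges labelled $S_{i+1}$. Concatenating these paths around the cycle produces a nontrivial closed walk in $T_k$ that does not backtrack at the junctions $K_{i+1}$ (there the labels change from $S_{i+1}$ to $S_{i+2}$), contradicting that $T_k$ is a tree. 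Finally, the running-intersection property of the lower structure is inherited from that of $T_k$, so the separators do define a $(k-1)$ order cherry-tree.

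I expect the acyclicity step of the sufficiency direction to be the main obstacle, since it is the only place where the combinatorics of the whole tree, rather than a single cluster, enters; the ``at most two'' hypothesis is used there in an essential way, both to rule out a triangle at a single cluster and to guarantee that consecutive clusters $K_i$ are distinct so that the lifted closed walk is genuinely nontrivial. Invoking Theorem~\ref{theo:theo5.8} then completes the equivalence and hence the lemma.
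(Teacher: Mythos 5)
Your proposal is correct in outline and is in fact substantially more complete than the paper's own proof. Both arguments reduce the lemma to Theorem~\ref{theo:theo5.8}, but they diverge after that. The paper handles only one implication: it observes (via Figure~\ref{fig:fig5}) that three distinct separators at one cluster obstruct the truncated R-vine property, and then gives the running-intersection argument that the three sets $K\setminus\{i_1\}$, $K\setminus\{i_2\}$, $K\setminus\{i_3\}$ cannot all sit in a $(k-1)$ order cherry-tree because any arrangement of them on a path leaves an element (e.g.\ $i_2$) in the two end sets but not in the middle one. Note that the paragraph the paper labels ``sufficiency'' actually restates and re-proves necessity; the genuine sufficiency direction --- that ``at most two distinct separators per cluster'' forces the separators to form a $(k-1)$ order cherry-tree --- is never argued there. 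You correctly identify that direction as the real content and supply it: the size and connectedness checks are routine, and your acyclicity argument (lifting a putative cycle $S_1-\cdots-S_r-S_1$ of separators to a non-backtracking closed walk in $T_k$ by concatenating the paths inside the subtrees $T(S_{i+1})$, using the hypothesis to force $K_i\neq K_{i+1}$) is sound, since every edge of $T_k$ interior to $T(S)$ indeed carries the label $S$ and a tree admits no nontrivial non-backtracking closed walk. The running-intersection check you defer is also fine: for each vertex $v$ the separators containing $v$ label exactly the edges of the connected subtree $T(v)$, hence induce a connected piece of the separator tree.

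One small caveat on your necessity direction: the triangle you exhibit lives in the specific separator graph built by Step~1 of Algorithm~\ref{alg:alg5.9}, so it only shows that \emph{that} candidate structure is not a tree. Since ``the separators define a $(k-1)$ order cherry-tree'' is naturally read existentially (some junction-tree arrangement of the separators exists), you should also rule out arrangements in which the three separators are not pairwise adjacent. This is exactly what the paper's running-intersection computation does, and it extends to arbitrary trees: if none of the three lies on the path between the other two, the median node would have to contain $\bigl(S_1\cap S_2\bigr)\cup\bigl(S_1\cap S_3\bigr)\cup\bigl(S_2\cap S_3\bigr)=K$, which is impossible for a set of size $k-1$. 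With that sentence added, your proof is complete and strictly stronger than the one in the paper.
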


\begin{proof}
First the necessity. If a cluster is connected to its neighbors by more than
two different separators then the cherry-tree copula is not a truncated
R-vine, see Figure \ref{fig:fig5}.

Now the sufficiency. We want to prove that if a cherry-tree copula is a truncated
R-vine copula then each cluster has at most two different separators. This is
equivalent to the following. If a cluster of a cherry-tree has more than to different
separators then it is not a truncated R-vine.

Let $\left\{  i_{1},\ldots,i_{k}\right\}$ be an arbitrary cluster of a $k$
order cherry-tree. Let us suppose that it is connected to its neighbors by
three different separators. Without loss of generality let us denote these
separators as follows: $S_{\backslash i_{1}}=\left\{  i_{2},\ldots,i_{k}\right\}  ,\quad
S_{\backslash i_{2}}=\left\{  i_{1},i_{3}\ldots,i_{k}\right\}  ,\quad S_{\backslash i_{3}}%
=\left\{  i_{1},i_{2},i_{4}\ldots,i_{k}\right\}  $. Any two of them will
define a ($k-1$) order cherry-tree but all three will not, since for any
permutation of the three separators, there will be an element which do not
fulfill the running intersection property. For example if the following
connection is proposed%
\[
S_{\backslash i_{1}} - S_{\backslash i_{2}} - S_{\backslash i_{3}}
\]
then $i_{2}$ occurs in the first and last set, but not in the set on the path
between them.
\end{proof}

This relationship was also discussed in a recent paper by Hobaek-Haff et al. (2016).

Lemma \ref{lemm:lemm3.1} can be used for checking wether a cherry-tree copula is or is not a truncated R-vine copula.

At this point we can conclude that the truncated vine at level $k$ is a $k$ order cherry-tree copula, but not every $k$ order cherry-tree copula can be obtained as a
truncated vine at level $k$. At a given level $k$ the number of cherry-tree copulas is much larger than the number of truncated R-vine copulas.

\section{Methods for constructing cherry-tree copulas and truncated R-vine copulas}
\label{sec:sec4}
\smallskip

Regarding to Kurowicka's approach where she says: ''We start building the
vine from the top node, and progress to the lower trees, ensuring that
regularity condition is satisfied and partial correlations corresponding to
these nodes are the smallest. If we assume that we can assign the
independent copula to nodes of the vine with small absolute values of
partial correlations, then this algorithm will be useful in finding an
optimal truncation of a vine structure'', we claim, that there are copulas
which have conditional independences in the top trees ($m \ge k$), however they have not
a truncated R-vine structure at level $k$.

We may have the following decomposition from the last node backward, which leads to 
the cherry-tree which is not truncated R-vine in Figure \ref{fig:fig6}.

\begin{figure}[!ht]
\centering\includegraphics[bb=70 420 570 770,width=9 cm]{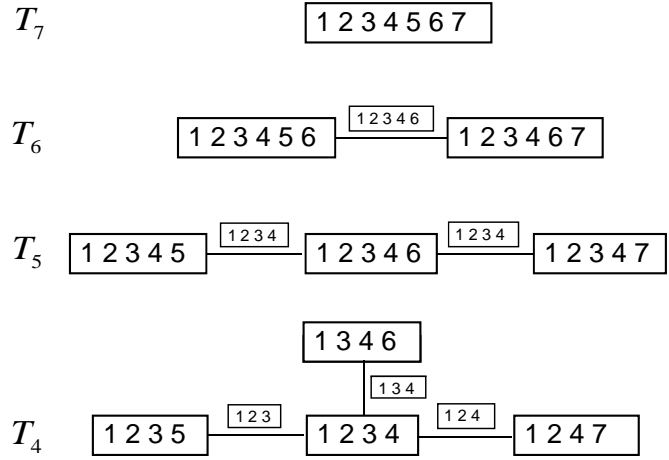}
\caption{The backward decomposition which leads to a 4-th order cherry-tree, but not an R-vine truncated at level 4.}
\label{fig:fig6}
\end{figure}

At this end the following question may arise. How can we express in general a 
cherry-tree copula by pair-copulas and conditional pair-copulas. We have two possibilities:

First, if the cherry-tree copula is a truncated R-vine copula (the separators form a tree
as we have seen in Theorem \ref{theo:theo5.8}), 
then use Algorithm \ref{alg:alg5.9} to achieve a
truncated vine structure, to which will be assigned the pair-copulas. In this case its formula is the following:

\[
\begin{array}{l}
\dfrac{\prod\limits_{K\in {\mathcal C}_{ch}
}c_{K}\left(  F_{\mathbf{X}_{K}}(\mathbf{x}_{K})\right)  }{\prod\limits_{S\in
\mathcal{S}_{ch}}\left[  c_{S}\left(  F_{\mathbf{X}_{S}}(\mathbf{x}_{S})\right)  \right]
^{\nu_{S}-1}}
= \left[  \prod\limits_{\left(  l_{i}%
m_{i}\right)  \in E_{1}}^{{}}c_{l_{i}m_{i}}\left(  F_{l_{i}}\left(  x_{l_{i}%
}\right)  ,F_{m_{i}}\left(  x_{m_{i}}\right)  \right)  \right]  \cdot\\
\prod\limits_{l=2}^{k-1}\prod\limits_{e_{i}^{l},e_{j}^{l}\in N\left(
T_{l}\right)  }^{{}}c_{a_{i,j}^{l},b_{i,j}^{l}|S_{ij}^{l}}
\left(F_{a_{i,j}^{l}|S_{ij}^{l}}\left(  x_{a_{i,j}^{l}}|\mathbf{x}_{S_{ij}^{l}%
}\right)  ,F_{b_{i,j}^{l}|S_{ij}^{l}}\left(  x_{b_{i,j}^{l}}|\mathbf{x}%
_{S_{ij}^{l}}\right)  |\mathbf{x}_{S_{ij}^{l}}\right).
\end{array}
\]
where $e_{i}^{l},e_{j}^{l}\in N\left(  T_{l}\right)  $ denotes that $e_{i}%
^{l},e_{j}^{l}$ are linked in the cherry tree $T_{l}$, and $S_{ij}^{l}%
,a_{i,j}^{l},b_{i,j}^{l},$ are defined by (\ref{eq:eq7a}) and $F_{a_{i,j}^{l}|S_{ij}^{l}}%
$ is defined in similar way as in Theorem \ref{theo:theo2.9}.

Second, if the cherry-tree copula is not a truncated R-vine copula the following theorem will be powerful for solving this problem. 

Before stating the following theorem it is important to call the attention on the following. Because the same cherry-tree can be represented graphically in multiple ways (when $\nu_S$ is greater than $1$), it is important to start with a cherry-tree where the same separators link the same cluster to other clusters. This means that from all clusters linked by the same separator we choose one and all the others will be linked to it. In this way all the other clusters will be neighbors of the chosen one. 

\begin{thm}\label{theo:theo5.9}
Starting from any $k$ order cherry-tree copula the $(k+1)$ cherry-tree copula obtained by joining the neighbor clusters via Definition \ref{def:def4.1} will be a truncated
R-vine copula.
\end{thm}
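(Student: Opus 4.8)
The plan is to reduce the claim to the combinatorial criterion of Lemma~\ref{lemm:lemm3.1}: it suffices to show that in the resulting $(k+1)$ order cherry-tree every cluster is connected to its neighbors by at most two distinct separators. First I would fix the representation of the starting $k$ order cherry-tree $T_k$ exactly as prescribed in the remark preceding the theorem, i.e. for every separator value I designate one cluster as a hub and connect all clusters sharing that separator to it. With this convention the construction of Definition~\ref{def:def4.1} is unambiguous, and by Theorem~\ref{theo:theo5.5} the $(k+1)$ order cherry-tree $T_{k+1}$ obtained by joining neighbour clusters expresses the same copula, so it remains only to examine its graph structure.

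Next I would establish the key structural fact that the separators of $T_{k+1}$ are precisely clusters of $T_k$. Each cluster of $T_{k+1}$ has the form $C_i\cup C_j$, where $C_i,C_j$ are clusters of $T_k$ linked by the separator $S_{ij}=C_i\cap C_j$ of size $k-1$. Fix a cluster $C_0$ of $T_k$ with neighbours $C_1,\ldots,C_m$. For any two of the joined clusters $C_0\cup C_p$ and $C_0\cup C_q$ one has
\[
(C_0\cup C_p)\cap(C_0\cup C_q)=C_0\cup(C_p\cap C_q)=C_0,
\]
because $C_p$ and $C_q$ are both adjacent to $C_0$, and the running intersection property of the junction tree $T_k$ (here $C_0$ lies on the path between $C_p$ and $C_q$) forces $C_p\cap C_q\subseteq C_0$. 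Hence all clusters $C_0\cup C_1,\ldots,C_0\cup C_m$ pairwise intersect in exactly $C_0$ and are mutually linked in $T_{k+1}$ by the single separator $C_0$, which has size $k$. Thus $T_{k+1}$ decomposes into such ``stars'', one for every cluster $C_0$ of $T_k$ that has degree at least two.

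Now I would observe that a given cluster $C_i\cup C_j$ of $T_{k+1}$ belongs to exactly two of these stars, namely the star centred at $C_i$ and the star centred at $C_j$. Consequently every neighbour of $C_i\cup C_j$ in $T_{k+1}$ is joined to it either through the separator $C_i$ or through the separator $C_j$, so each cluster of $T_{k+1}$ is connected to its neighbours by at most two different separators. By Lemma~\ref{lemm:lemm3.1} this is exactly the necessary and sufficient condition for a cherry-tree copula to be a truncated R-vine copula, which proves the theorem.

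The step I expect to be the main obstacle is the structural analysis in the second paragraph: one must verify, using the running intersection property of $T_k$ together with the canonical hub representation, that the intersection of two joined clusters is exactly the shared cluster $C_0$ (never larger), and that no two clusters coming from non-adjacent edges of $T_k$ become linked in $T_{k+1}$ through a separator other than a hub cluster. Handling the case $\nu_S>1$, where several clusters of $T_k$ share the same $(k-1)$ separator, is precisely what the canonical representation is designed to control, and it is the reason the remark insists on connecting all such clusters to a common hub before performing the joins.
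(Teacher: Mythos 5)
Your proposal is correct and follows essentially the same route as the paper: both arguments reduce the claim to the two-separator criterion of Lemma~\ref{lemm:lemm3.1}, observe that the separators of the $(k+1)$ order tree are clusters of the $k$ order tree, and conclude that a cluster $C_i\cup C_j$ can only be separated from its neighbours by $C_i$ or $C_j$. Your version is in fact somewhat more careful than the paper's (which argues via a case split and the example of Figure~\ref{fig:fig7}), since you justify the intersection computation $(C_0\cup C_p)\cap(C_0\cup C_q)=C_0$ explicitly through the running intersection property.
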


\begin{proof}
Since we are interested only in the number of the different separators, we
may suppose without loss of generality that all separators have multiplicity one. 

\begin{figure}
  \includegraphics[bb=10 205 530 530,width=14cm]{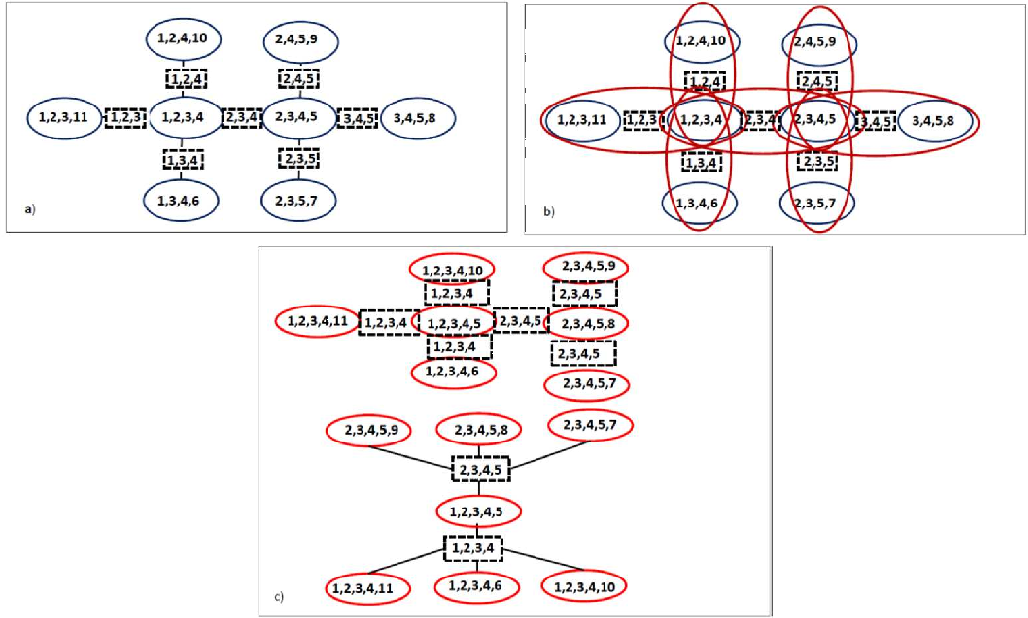}
\caption{a) A cherry-tree copula which is not truncated R-vine , b) Joining the neighboring clusters via Definition \ref{def:def4.1}, c) The obtained 5-th order cherry-tree which is a truncated R-vine in two representations}
\label{fig:fig7}       
\end{figure}

We have two cases. 

In the first case the $k$ order cherry-tree copula is already a truncated R-vine copula,
then by Definition \ref{def:def4.1} the obtained $k+1$ order cherry-tree copula is also a
truncated R-vine.

In the second case we suppose that the $k$ order cherry-tree copula is not a truncated R-vine copula. This means by
Theorem \ref{theo:theo5.9} that the set of separators do not define a 
cherry-tree. Lemma \ref{lemm:lemm3.1} implies that there exists at least one cluster 
$C_{k}^{\ast}=\left\{ i_1, i_2, \ldots ,i_k\right\}$ which is connected to its neighbors by more than two different
separators. 

By joining the neighbor clusters in the $k$
order cherry-tree using Definition \ref{def:def4.1} we obtain a $(k+1)$ order 
cherry-tree in which the
separators, correspond to the clusters in the $k$ order tree. Let us denote
by $C_{k+1}^{\ast }$ a cluster of the $\left( k+1\right) $ order 
cherry-tree obtained by joining two clusters, such that one of them is 
$C_{k}^{\ast }$. Now $C_{k+1}^{\ast }$ is separated by $C_{k}^{\ast }$ 
and at most by another different separator. For a better understanding let $C_{k}^{\ast }=\left\{  1,2,3,4 \right\}$  see picture a) of Figure \ref{fig:fig7}. Then in picture b) the process of joining the neighbor clusters is presented. In picture c) of Figure \ref{fig:fig7} we have $C_{k+1}^{\ast }=\left\{  1,2,3,4,5 \right\}$ which has two different separators connected to it, one of them is 
$C_{k}^{\ast }=\left\{  1,2,3,4 \right\}$  the other is $\left\{  2,3,4,5 \right\}$.

We emphasize here, that by joining any two $k$ order clusters in the $k$
order cherry-tree we obtain a $(k+1)$ order cluster which will have at most two neighbor
clusters in the $(k+1)$ order tree, connected to it by different separators.
By Lemma \ref{lemm:lemm3.1} the $(k+1)$ tree obtained from the $k$
order tree by Definition \ref{def:def4.1}, will have a truncated R-vine structure.
\end{proof}

We conclude this section with the following. It is easy to see how restrictive is
to search for truncated R-vines only by building from bottom up the first
trees in a greedy way. A truncated R-vine is defined by its top tree, the
lower trees can be chosen arbitrarily only by fulfilling the condition given in Definition \ref{def:def4.1}.

The idea is searching good fitting cherry-tree copulas and then to express it by a truncated R-vine copula. We proved in Theorem \ref{theo:theo5.9} that any
cherry-tree copula can be transformed into a truncated R-vine which can be
reached by using the Backward Algorithm.

\section{Conclusions}
\label{sec:sec6}

In modeling multivariate probability
distribution, an important task is to exploit some conditional independences
existing between the random variables. We introduced 
in Sz\'{a}ntai, Kov\'{a}cs (2012) the discrete
cherry-tree probability distributions, then 
in Kov\'{a}cs, Sz\'{a}ntai (2012a) the cherry-tree copulas.
The results of the present paper link the cherry-tree copula to the truncated
R-vine which makes possible the use of cherry-tree structures in modeling
continuous probability distributions, too.

If the number of variables grows the general R-vine copula modeling gets
untractable. A method to overcome this problem is exploiting the conditional
independences between the variables. The cherry-tree copulas are able to
exploit these conditional independences. Another model containing
conditional independences is the truncated R-vine. In the literature it was
mainly fitted in greedy way from bottom to up. 

In this paper we clarify the relation between the
cherry-tree copula and the truncated R-vine copula. The cherry-tree copula is more
general than the truncated R-vine copula, but the truncated R-vine copula has the powerful
property that it can be expressed by pair-wise copulas and pair-wise conditional
copulas. We proved that a $k$-th order cherry-vine copula can be either expressed as a truncated R-vine copula at level $k$ (by using the Backward Algorithm) or transformed into a $k+1$ order cherry-tree copula which can be expressed by a truncated vine copula at level $k+1$ (Theorem \ref{theo:theo5.9}). In this way the
cherry-tree copula gets also this powerful property.

In Kov{\'a}cs, Sz{\'a}ntai (2012a) we proved that any general $k$-width junction tree copula can be
embedded in a $k$ order cherry-tree copula. This shows the power of
cherry-tree copulas related to general junction tree copulas.

Because the truncated R-vine copula is completely characterized by the top tree (at a given level), in our opinion finding good truncated R-vines should be started by finding a good top tree (cherry-tree). A possible method for this, starting from a data set, is presented in Kov\'{a}cs, Sz\'{a}ntai (2010). Then one can construct the sequence of the cherry-trees which leads to it. This is the so called cherry-vine structure. 

We believe our approach may open a new perspective in modeling continuous multivariate probability distributions by exploiting the conditional independences between the components of the random vector. We challenge the vine copula community to search for good models from this perspective.

\smallskip
{\footnotesize
\hspace*{0.5cm}

\begin{minipage}[t]{8cm}$$\begin{array}{l}
\mbox{Edith Kov\'{a}cs -- Department of Differential Equations,}\\
\mbox{Budapest University of Technology and Economics},\\
 \mbox{M\H{u}egyetem rkp. 3., Budapest, 1111 HUNGARY}\\
\mbox{E-mail: kovacsea@math.bme.hu}\\ \\
\mbox{Tam\'{a}s Sz\'{a}ntai -- Department of Differential Equations,}\\
\mbox{Budapest University of Technology and Economics},\\
 \mbox{M\H{u}egyetem rkp. 3., Budapest, 1111 HUNGARY}\\
\mbox{E-mail: szantai@math.bme.hu}\end{array}$$
\end{minipage}}

\end{document}